\pdfoutput=1
\documentclass[pdflatex,sn-apa]{sn-jnl}

\usepackage{graphicx}%
\usepackage{multirow}%
\usepackage{amsmath,amssymb,amsfonts}%
\usepackage{amsthm}%
\usepackage{xcolor}%
\usepackage{textcomp}%
\usepackage{manyfoot}%
\usepackage{booktabs}%
\usepackage{algorithm}%
\usepackage{algorithmicx}%
\usepackage{algpseudocode}%
\usepackage{listings}%
\usepackage{tikz}
\usetikzlibrary{positioning, arrows.meta, calc}
\usepackage{float}
\usepackage{array}
\usepackage{geometry}
\DeclareUnicodeCharacter{00A0}{ }
\DeclareMathOperator*{\argmax}{arg\,max}
\newtheorem{proposition}{Proposition}
\newtheorem{corollary}{Corollary}

\newtheorem{definition}{Definition}
\usetikzlibrary{shapes.geometric}

\begin{document}

\title[No Certificate, No Categorical Speech Act]{No Certificate, No Categorical Speech Act: A Brouwerian Assertibility Constraint for Public Reason}

\author*[1]{\fnm{Michael} \sur{Jülich}}\email{juelich@aihorizon-rd.org}

\affil*[1]{\orgname{aihorizon R\&D}, \orgaddress{\city{Munich}, \country{Germany}}}

\abstract{Generative AI can convert uncertainty into authoritative-seeming verdicts, displacing the justificatory work on which democratic epistemic agency depends. As a corrective, I propose a Brouwer-inspired assertibility constraint for responsible AI: in high-stakes domains, systems may assert or deny claims only if they can provide a publicly inspectable and contestable certificate of entitlement; otherwise they must return \textit{Undetermined}. This constraint yields a three-status interface semantics (\textit{Asserted}, \textit{Denied}, \textit{Undetermined}) whose statuses mark entitlement to categorical speech rather than truth values of the underlying world-claim. The framework distinguishes \textit{internal} entitlement from \textit{public} standing while connecting them through the certificate as a \textit{boundary object}. Philosophically, it develops a form of applied intuitionism: resistance to unwarranted completeness becomes a norm of public machine speech, under which categorical commitment requires exhibited and contestable entitlement. I operationalize the constraint through structured certificates, decision-layer gates for threshold and argmax predicates, and a reason-coded output contract. A \textit{design lemma} shows that, under a refutation-soundness condition on the contractually specified negative side, certificate-sound binary totality presupposes witnessed decidability on the declared scope. Under the output contract, \textit{Undetermined} is therefore not a tunable reject option but a mandatory status whenever the available certificates do not uniquely license assertion or denial. By making outputs answerable to contestable warrants rather than confidence alone, the paper aims to preserve epistemic agency where automated speech enters public justification.}

\keywords{Generative AI, Democratic Backsliding, Epistemic Agency, Mathematical Intuitionism, Public Reason}

\maketitle

\section{Introduction}\label{sec:intro}

Upholding epistemic agency -- our capacity to recognize, evaluate, and act on knowledge -- has always been both an individual and a collective responsibility. As generative AI systems grow in capability and opacity, sustaining that responsibility becomes increasingly difficult. Misinformation scales more easily, bias is harder to diagnose, and the grounds of model outputs recede from view as fluent responses pass for warranted knowledge. It is no accident that education has long been regarded as a primary defense. In a 1914 speech at Oxford University, John Alexander Smith put the point bluntly:

\begin{quote}
``Nothing that you will learn in the course of your studies will be of the slightest possible use to you in after life, save only this, that if you work hard and intelligently you should be able to detect when a man is talking rot, and that, in my view, is the main, if not the sole, purpose of education'' \cite[p.~27]{horne1988macmillan}.
\end{quote}

Smith's remark gains renewed urgency in a world saturated by AI, where the task of detecting ``rot'' concerns not only human speakers but also machines that speak with the semblance of authority.

In this paper, I ask what justificatory standards responsible AI must meet in high-stakes domains\footnote{``High-stakes'' denotes domains in which outputs can affect public deliberation, legal status, health, liberty, or comparable basic interests. The designation is functional and applies whenever outputs are foreseeably used in such decisions or in their justificatory context. It is fixed by the declared scope $\Sigma$ and contract $\mathcal{C}$; where plausible harms to basic interests remain, prudence requires treating the domain as high-stakes.} if it is to remain rationally acceptable amid deepening uncertainty. I answer by turning this concern into a design constraint on machine speech: in high-stakes domains, an AI system may assert or deny a claim only when it can present a publicly inspectable and contestable certificate of entitlement; otherwise it must abstain from categorical outputs and instead return \textit{Undetermined}. I call this requirement the \textit{Brouwerian assertibility constraint}. The aim is to supply a public-facing rule for when categorical speech is warranted, rather than another technical treatment of uncertainty. Set-valued and probabilistic methods -- such as conformal prediction \citep[see][]{quach2024conformal, campos2024conformal} or semantic entropy \citep[see][]{kuhn2023semantic} -- can contribute to such certificates. Yet they do not, by themselves, constitute a norm that distinguishes warranted assertion from unwarranted fluency. My principle is simple: \textit{no certificate, no categorical speech act}. 

Responsible AI is a technical and philosophical challenge, concerned with truth, justification, and restraint. Technological progress reshapes the conditions of everyday life as much as it extends scientific knowledge, creating new demands for self-criticism and responsible acknowledgment.
In this spirit, I draw on Stanley Cavell's description of philosophy as ``education of grownups'' \citep[p.~125]{cavell1979claim} and especially on his distinction between \textit{knowing} and \textit{acknowledging} \citep[pp.~238--266]{cavell2002knowing}. For AI systems, correctness alone is not enough. In high-stakes domains, outputs must be presented in a form that can be acknowledged as a claim upon us. Such a claim demands reasons, permits refusal, and invites revision by exposing its grounds to challenge and accountability. The deeper ambition, then, is to transpose Brouwer's intuitionist discipline of exhibited construction into machine speech. Since machines cannot supply the lived experience or acknowledgment that grounds responsible saying, categorical output should be mediated by publicly contestable certificates of entitlement.

Hence, the response must be educational as well as technical. It requires a shared practice of learning when to acknowledge claims and when to refuse them in contexts where authority is easy to simulate and difficult to justify. What follows articulates an algorithmic grammar of justification, translating the philosophical argument into a normative specification for machine speech.

\textbf{Section}~\ref{sec:erosion} situates responsible AI within the \textit{epistemic turn} in political theory and develops a truth-grounded conception of epistemic agency. \textbf{Section}~\ref{sec:brouwerian_toolkit} introduces the \textit{Brouwerian toolkit} that underwrites the assertibility constraint: construction, temporally indexed warrant, intuitionistic logic, and open-ended choice sequences. \textbf{Section}~\ref{sec:interface_semantics} then operationalizes the constraint at the decision layer by formulating a \textit{three-status interface semantics} (\textit{Asserted}/\textit{Denied}/\textit{Undetermined}), connecting internal witness entitlement to public standing through the certificate as a \textit{boundary object}, and illustrating the framework through a political chatbot case study organized around a \textit{structured witness tree}. The section culminates in the \textit{design lemma}, which shows that, under a refutation-soundness condition, certificate-sound binary totality yields witnessed decidability on the declared scope. Coupled with the output contract, this result makes \textit{Undetermined} mandatory whenever the permitted trace does not support exactly one side with an adequate forcing certificate. \textbf{Section}~\ref{sec:limitations} addresses the framework's limitations, and \textbf{Section}~\ref{sec:conclusion} concludes.

\section{Against the Erosion of Epistemic Agency}\label{sec:erosion}

A central feature of generative AI is its tendency to convert probabilistic pattern completion into authoritative-seeming answers, thereby inviting users to treat them as licensing categorical belief. When such outputs are unreliable or misleading, citizens' epistemic agency is threatened at the very point at which they form, revise, and justify political judgments.\footnote{A growing empirical and conceptual literature examines generative AI's effects on democratic processes \citep[e.g.,][]{summerfield2025impact, hackenburg2025levers, jungherr2023artificial}.}

Empirical work by Lin et al. (\citeyear{lin2025persuading}) suggests that large language models can shift voters' candidate preferences and policy views chiefly through factual and evidential claims rather than sophisticated psychological techniques, even when some of those claims are inaccurate. The danger is especially acute when such claims are presented with apparent authority but lack public standing, thereby circumventing the justificatory friction necessary for democratic epistemic agency.

Recent work by Avigail Ferdman (\citeyear{ferdman2025ai}) sharpens this concern. She argues that AI mediation can deskill citizens by reducing opportunities to cultivate and exercise capacities on which public reason depends -- including justice, joint action, patience, and moral attention.

I wish neither to downplay what is happening nor to lapse into “catastrophe talk.” Yet if these risks remain unaddressed, democratic societies face a pressing danger. Responsible AI is therefore a normative project with direct implications for regulation. It concerns how technological development should be shaped in light of the common good. Seth Lazar's (\citeyear{lazar2025governing}) account of AI systems as ``algorithmic intermediaries'' that wield structural power within our ``Algorithmic City'' gives this urgency concrete form. When such systems issue unchecked categorical verdicts, they exercise that power without adequate justification. Insofar as AI systems already function as epistemic forces within democratic processes, we need stringent norms governing what they may assert or deny, and when they are required to abstain.

\subsection{The Epistemic Turn}\label{subsec:epistemic_turn}

Given these epistemic threats, what resources can democratic theory offer? A Rawlsian answer \citep{rawls1993political, rawls1993domain} might seem attractive. In a democracy, the plurality of opinions is not a problem to be solved but a condition to be sustained. One might naturally extend this thought into a norm of restraint governing the licensed assertion of contested claims rather than comprehensive doctrines. A related commitment appears in Cavell's formulation of morality's task. We need not agree with one another in order to share the same moral world, but we must be able to know, acknowledge, and respect one another's differences \citep[p.~269]{cavell1979claim}. Yet pluralism and mutual acknowledgment alone do not tell us how to proceed when our shared epistemic reality is actively undermined.\footnote{Even granting criticisms of the idea of a ``post-truth'' era \citep[see][]{hannon2023politics}, the scaling of automated, authoritative-seeming speech still raises the practical question addressed here: when such systems are entitled to assert claims.}

Here, I follow Mark Coeckelbergh (\citeyear{coeckelbergh2023democracy, coeckelbergh2024why}), who builds a crucial bridge between AI and political theory by examining their interrelation through the lens of the epistemic turn. He recenters truth and knowledge in political life, arguing that if we care about democracy, we must take measures to avoid, limit, and mitigate AI-related epistemic harms, as well as to protect epistemic and political agency in the face of socio-technological transformations \citep[p.~1348]{coeckelbergh2023democracy}. In doing so, Coeckelbergh grounds his argument in Hélène Landemore's interpretation of the epistemic turn (\citealp{landemore2013democratic, landemore2017beyond}; see also \citealp{cohen1986epistemic}; \citealp{estlund2008democratic}). 

Central to her account is the idea that truth functions as a regulative ideal for argumentation, thereby structuring our exchanges even if we never access truth in itself. Under suitably ideal deliberative conditions, rational consensus retains its normative appeal because it can serve as an indicator that a form of probable truth has been reached. Landemore maintains that asking citizens to remember that truths advanced in political discourse are always political, and to adopt a fallibilist and tolerant attitude, may be exacting, but remains conceptually sound \citep[p.~287]{landemore2017beyond}. On her view, once practical reasoning presupposes the possibility of better and worse answers, any viable meta-ethical theory of politics must accommodate this fact. Political relativism, or even nihilism of the sort that often underwrites political science and political theory, is thus conceptually incoherent \citep[p.~285]{landemore2017beyond}.

\subsection{Epistemic Commitment}\label{subsec:epistemic_commitment}

Following Landemore's rejection of non-cognitivism in the political domain, I treat truth as a regulative ideal for responsible AI. This commitment matters especially where truth is reduced to the outcome of competition among arguments. Preserving the epistemic agency required for meaningful democratic participation demands more than the management of disagreement. Technological design must specify when AI systems may assert, deny, or abstain. Accordingly, the epistemic turn calls not for claims of infallible access to reality, but for explicit warrants governing categorical output.

At the same time, such a commitment is not without danger. Appeals to epistemic quality can marginalize citizens, weaken political equality, and legitimate rule by experts \citep[see][]{jorke2010epistemic, lafont2020democracy}. Responsible AI must consequently resist both \textit{epistemic nihilism} and \textit{technocratic determinism}. Cavell's demand for acknowledgment points toward an alternative. Claims made in public must remain answerable to those upon whom they make demands, rather than deriving authority from expertise alone.

Against this background, AI systems risk eroding epistemic agency when they present contested political propositions as already settled. By ``contested,'' I mean both (i) empirical claims that lack publicly inspectable warrant at the time of output and (ii) normative or value-laden questions for which no publicly declared evidential standard and threshold are in place. In neither case do degrees of confidence suffice to license categorical treatment. The constructive approach developed in Section~\ref{sec:interface_semantics} instead constrains when a proposition may be asserted or denied and requires the system to disclose, in usable form, the limits of its standing.

However, this approach remains fallibilist while shifting attention from what is probably true to when a machine is warranted in speaking categorically. Constructive assertibility is the form that a commitment to truth takes when translated into a norm for machine speech: categorical output must be licensed by publicly checkable and contestable witnesses.

\section{A Brouwerian Toolkit for Constructive Assertibility}\label{sec:brouwerian_toolkit}

\subsection{Mapping Brouwer}\label{subsec:mapping_brouwer}

Brouwer's mathematical intuitionism is well suited to the present argument because it unites construction, temporality, intuitionistic logic, and open-endedness within a single methodological tradition. Together, these elements translate epistemic responsibility into concrete design constraints, whose formal consequences Section~\ref{subsec:design_lemma} establishes.

The Brouwerian lesson is that public-facing machine claims may be licensed only by exhibited entitlement. This is especially salient for generative AI systems whose language can present outputs as authoritative answers even when the decision processes underlying them -- including scores, thresholds, and argmax comparisons -- remain opaque. Accordingly, outputs accompanied by uncertainty still constitute claims whose categorical assertibility depends on an inspectable certificate of entitlement.

Uncertainty quantification might appear to resolve the relevant deployment problem already. Probabilistic frameworks -- including LLM-specific advances such as semantic entropy for autoregressive generation \citep[see][]{kuhn2023semantic} and conformal prediction for natural language \citep[see][]{quach2024conformal, campos2024conformal} -- provide prediction sets, rejection options, and coverage guarantees. Such methods may contribute important components of a certificate. Yet many of their guarantees are distributional or relative to a reference class. Public warrant therefore requires a further determination under declared and contestable standards: whether an adequate certificate contains a regime-appropriate witness that forces the particular deployed claim or comparison at issue. Entitlement thereby remains responsive to changes in the public record rather than to confidence shifts alone.

Unlike confidence-plus-explanation approaches, the certificate discipline makes the categorical speech act itself conditional on a contestable warrant. Hence, a witness is more than a reason reported alongside an output. It is a condition of assertion whose absence requires abstention.

\subsection{Four Brouwerian Elements}\label{subsec:four_elements}

The following subsections develop four Brouwer-inspired elements: (i) mathematical objects as constructions, (ii) the temporal character of truth and knowledge, (iii) a logic governed by what can be demonstrated, and (iv) choice sequences as a model of open-ended construction.

\subsubsection{Mathematics as Construction}

Brouwer's intuitionistic foundation of mathematics aims not at formal security but at conceptual evidence. Mathematics is not, on this account, a theory of independently existing objects, but a practice of construction through temporally ordered mental acts. A statement of the form $\exists x\,P(x)$ is justified only by a construction that produces an object satisfying $P$. Correspondingly, $\lnot\exists x\,P(x)$ is justified only by a construction showing that any purported witness would lead to contradiction. Logic enters only afterward, as a description of patterns within constructive activity, and not as an independent source of mathematical truth \citep[pp.~1--2, 13]{atten2004brouwer}.

\subsubsection{Rejection of LEM, Truth as Proof, and Temporality}

Brouwer rejects the unrestricted use of the \textit{law of excluded middle} (LEM), especially when principles valid for finite cases are extended to infinite domains without constructive warrant \citep[p.~149]{ferreiros2008crisis}. Schematically, LEM holds that every proposition $p$ satisfies $(p\vee\lnot p)$, independently of whether either side can be demonstrated. For statements about infinite structures, Brouwer denies that such a disjunction has intuitive evidence in the absence of a construction establishing one of its terms.

Unsolved problems make the objection vivid. Consider the claim that the digit 7 appears infinitely often in the decimal expansion of $\pi$. Classically, either it does or it does not, even though neither alternative has been established. For Brouwer, asserting the disjunction without a construction illegitimately treats an undecided problem as though it were already decided \citep[p.~150]{ferreiros2008crisis}.

This resistance to unwarranted applications of excluded middle is inseparable from Brouwer's temporal conception of mathematical truth.
He writes that ``[t]ruth is only in reality, i.e., in the present and past experiences of consciousness [\ldots]. There are no non-experienced truths'' \citep[p.~1243]{brouwer1949consciousness}. A mathematical proof is experienced through the successful execution of a construction, whether carried out directly or grasped as a method capable of completion. Proof is therefore not merely an abstract object but a temporally situated sequence of acts. A proposition acquires a truth value when a proof or refutation is given; until then, it is not false but remains without a truth value \citep[p.~10]{atten2004brouwer}.

Brouwer gives this temporal epistemology greater precision through a fourfold distinction among statuses \citep[p.~114]{brouwer1955effect}:
\begin{enumerate}
\item $p$ has been proved true.
\item $p$ has been proved false, i.e., absurd.
\item $p$ has not been proved true or absurd, but we know an algorithm that will eventually decide the matter one way or the other.
\item $p$ has not been proved true or absurd, and we know of no such decision procedure.
\end{enumerate}

Epistemic status thus exceeds the classical true--false dichotomy. An unproved and unrefuted claim may remain undecided either because an available decision procedure has not yet been completed or because no such procedure is known.

\subsubsection{Intuitionistic Logic as Constructive Semantics}\label{subsubsec:int_logic}

If mathematics consists of constructions, logical form must likewise be interpreted constructively. The \textit{Brouwer--Heyting--Kolmogorov interpretation} (BHK) captures this idea by defining a proposition's meaning in terms of what would count as a proof of it. 
Arend Heyting casts the point in linguistic terms by distinguishing a statement from its assertion: a mathematical statement expresses an intention, while its assertion reports that the intention has been fulfilled \citep[p.~113]{heyting1931intuitionistische}.

The logical constants render this discipline operational. A conjunction requires proofs of both conjuncts; a disjunction requires a proof of one side together with an indication of which; an implication requires a construction that transforms proofs of the antecedent into proofs of the consequent; a negation requires a construction that transforms any proof of the proposition into a contradiction; and an existential claim requires an exhibited witness; see the Supplementary Appendix (SA), Section~E.1, for the full schema. Assertion is therefore licensed only when the constructive task expressed by the proposition has been successfully carried out.

\subsubsection{Potential Infinity, Choice Sequences, and Open-Endedness}\label{subsubsec:choice}

Potential infinity is already visible in the natural numbers, which arise through indefinitely repeatable activity rather than being given as a completed totality (\citealp[pp.~4--5]{brouwer1981cambridge}; see also \citealp[p.~22]{dummett2000elements}).

Brouwer's theory of \textit{choice sequences} makes this constructive orientation explicitly temporal and open-ended. Potentially infinite sequences are developed term by term rather than pre-existing as finished objects. Unlike algorithmic or lawlike sequences fixed in advance, they may remain genuinely open. Their later development can restrict what was previously possible or confer properties not yet established. As Brouwer writes, a mathematical entity may, ``in its state of free growth, at some time acquire a property which it did not possess before'' \citep[p.~114]{brouwer1955effect}.

This perspective also reshapes the mathematical \textit{continuum}. Instead of treating it as a completed set of points, Brouwer understands its points as ongoing constructions. A constructive \textit{real} may be represented by an indefinitely extensible sequence of shrinking rational intervals and is thus available at any finite stage only through its current bounded approximation. Later stages may refine that approximation without presupposing that the entire sequence is already given.

\medskip
Taken together, the four Brouwerian elements supply a constructive discipline for public-facing machine speech. Construction becomes the demand for exhibited entitlement; the rejection of unrestricted excluded middle blocks categorical output where neither side has been constructively secured; BHK semantics gives warrant a compositional structure; and choice sequences model the temporal openness of numerical refinement and provide an analogy for the evolving public record, whose later development may revise, and not merely refine, prior entitlement. The point is not to attribute Brouwerian mental construction to machines, but to require their assertions and denials to be licensed by warrants that can be inspected, challenged, and revised.

\section{Assertibility and Interface Semantics for High-Stakes AI}\label{sec:interface_semantics}

I now operationalize these Brouwerian constraints through an interface semantics that distinguishes a system's \textit{internal} computations from its \textit{public-facing} claims. For orientation, Table~\ref{tab:notation} summarizes the fundamental notation.

\begin{table}[!htbp]
\caption{Core notation}\label{tab:notation}

\footnotesize
\renewcommand{\arraystretch}{1.05}
\setlength{\tabcolsep}{4pt}

\begin{tabular}{@{} p{0.19\textwidth} p{0.75\textwidth} @{}}
\toprule
\textbf{Symbol} & \textbf{Meaning} \\
\midrule
$x$ & input or case \\
$\varphi$, $\varphi_x$ & world-claim; claim associated with case $x$ \\
$\psi$, $\psi_x$ & model-claim about system computations or guarantees \\
$\langle R,t_{\mathrm{int}}\rangle$ & internal certification regime and configuration time \\
$\langle S,\tau,t\rangle$ & public-standing regime: standard, threshold specification (possibly empty or leaf-indexed), and record-time \\
$E_{t_{\mathrm{int}}}^R(\psi)$ & internal entitlement to assert $\psi$ \\
$W_t^{S,\tau}(\varphi)$ & public warrant-status question concerning the categorical treatment of $\varphi$ \\
$p(x)$ & generic deployed predicate in the design lemma; $\delta_\varphi(x)$ is its instance for $\varphi_x$ \\
$\delta_\varphi(x)\in\{0,1\}$ & deployed warrant predicate for $\varphi_x$; $\lnot\delta_\varphi(x)$ is its contractually specified negative side \\
$\Sigma$ & declared scope \\
$B_{\mathrm{budget}}$ & declared resource budget \\
$N_{\mathrm{max}}$ & declared refinement horizon \\
$\mathcal C$ & deployment contract \\
$\kappa$ & structured certificate \\
$\mathrm{Check}_{\mathcal C}(\kappa)$ & contract check for $\kappa$ \\
$\{A,D,U\}$ & interface statuses, not truth values \\
$v(\cdot)$ & status map assigning $A$, $D$, or $U$ \\
\midrule
\multicolumn{2}{@{}l@{}}{\scriptsize Case indices are omitted when fixed by context.} \\
\bottomrule
\end{tabular}
\end{table}

\subsection{Two Dimensions of Entitlement and the Certificate Boundary}\label{subsec:two_dimensions}

I do not take a position on the truth of particular political, moral, or legal claims, nor do I adopt Brouwer's mathematical conception of truth. Instead, I extend the Brouwerian constraint to warranted assertibility by distinguishing checkable internal entitlement from public standing under declared evidential standards.

The two spheres divide the justificatory work as follows. In the \textit{AI-internal sphere}, entitlement rests on exhibitable, verifier-checkable witnesses for claims about the system's computations and guarantees. In the \textit{public sphere}, entitlement depends on record-based grounds whose standing turns on provenance, applicability, and contestability.

Internal entitlement is typically necessary but never sufficient for public warrant. A checker-accepted witness may force the deployed predicate even when standing, scope, provenance, or other conditions of public adequacy remain unmet. The two dimensions are indexed accordingly: $\langle R,t_{\mathrm{int}}\rangle$ specifies the internal certification regime and configuration time, whereas $\langle S,\tau,t\rangle$ identifies the public evidential standard, threshold specification, and record-time.

\subsubsection{AI-internal: Constructive Entitlement and Model-Claims}\label{subsubsec:ai_internal}

At the AI-internal level, I use \textit{construction} for a procedure that produces an exhibitable and checkable witness for a model-claim under the declared regime. Such procedures may produce interval bounds for model evaluations, Lipschitz-based bounds, or certified ranges obtained through abstract interpretation. A certification artifact at this level is a verifier-checkable object establishing the relevant properties of the procedure, including soundness, correctness, or convergence, under the internal regime $\langle R,t_{\mathrm{int}}\rangle$.

Let $\psi$ range over model-claims concerning the system's own computations and guarantees. I write $E_{t_{\mathrm{int}}}^R(\psi)$ for the claim that, as of configuration time $t_{\mathrm{int}}$ and relative to regime $R$, the system is entitled to assert $\psi$. Such entitlement requires an exhibited, regime-checkable witness. Typical model-claims include that a score lies in a certified interval, that one logit exceeds another by a specified margin, or that a verifier accepts a robustness artifact. Calibration or reference-class guarantees may also supply admissible certificate components where the declared regime specifies the model-claim they certify and how it bears on the deployed predicate.\footnote{Distributional or coverage guarantees do not, by themselves, force an instance-level threshold claim in the manner of a certified interval. Their force depends on a regime that declares the relevant reference class, assumptions, and admissible inference to the deployed predicate; see SA~A.2--A.3.}

Here, \textit{constructive} has a strictly delimited, functional meaning and should not be confused with Brouwer's psychological thesis that mathematics consists in mental construction. In this sense, neural networks are amenable to a constructive reading: a forward computation, together with any required certification routine, can yield inspectable artifacts that license claims about the system's outputs.

A natural objection is that, if a deterministic system's outputs are already fixed, demanding an exhibited witness adds nothing. Mark van Atten's distinction between \textit{investigative} and \textit{generative} procedures in his reconstruction of Brouwer blocks this inference \citep[pp.~736--739]{atten2022dummett}. Investigative procedures assume that the fact making an outcome-statement true exists independently of running the procedure. Generative procedures produce what serves as the licensing basis for the outcome-statement. Applied to standard neural networks, determinism may fix an output, but it does not by itself license the investigative move from ``there is a definite output'' to ``the system is entitled to assert it.'' On the present account, the certification routine plays the generative role. Executing it under $\langle R,t_{\mathrm{int}}\rangle$ produces the checkable witness that serves as the licensing basis, which is exactly what $E_{t_{\mathrm{int}}}^R(\psi)$ records.\footnote{For Brouwer and van Atten, the generative stance is tied to the phenomenological ``inner time'' and freedom of the human creating subject. A deterministic algorithm lacks such consciousness. The present framework makes no claim about machine consciousness, but imposes a functional and normative constraint: automated systems must produce a checkable witness before participating in public justification. A fuller reconstruction is given in SA~E.2.}

\subsubsection{AI-external: Public Entitlement and Discursive Assertibility}\label{subsubsec:ai_external}

At the AI-external level, justification rests on publicly inspectable grounds, including records, testimony, authenticated sources, and institutional findings, together with the procedures that confer standing on them, such as audits, hearings, and adjudication. These are not proofs in the Brouwerian sense. They are elements of public warrant that citizens or authorized institutions must be able to inspect and contest under the applicable legal and political procedures.

Within this sphere, entitlement governs what may responsibly be asserted in public justificatory practice under declared standards and contestable procedures.\footnote{The framework governs categorical machine output, conditioning it on exhibited, publicly contestable warrant. General norms of human assertion and the attribution of knowledge to artificial speakers lie outside its scope; compare Williamson's knowledge norm and Lackey's reasonable-to-believe norm \citep{williamson2000knowledge,lackey2007norms}.} I therefore distinguish a world-claim $\varphi$, concerning how things are in the world, from the warrant status governing its categorical treatment. The expression $W_t^{S,\tau}(\varphi)$ represents the corresponding public warrant-status question: whether categorical treatment of the claim is licensed under the declared standard and record-time. Operationally, the deployed predicate $\delta_\varphi(x)$ specifies the warrant condition tested in a given implementation, together with its contractually specified negative side. That operationalization is itself a substantive and potentially contestable choice, whose limitations I address in Section~\ref{subsec:institutional_dependency}.

In the public dimension, the time index $t$ marks the state of the shared evidential record, not a hidden model state or private confidence measure. This creates a structural asymmetry between the two dimensions of entitlement. Whereas internal forcing may be monotone under sound nested refinement, public standing is defeasible and may be revised or withdrawn as new reasons, findings, and rebuttals enter the record, even when the model and its internal regime $\langle R,t_{\mathrm{int}}\rangle$ remain fixed.\footnote{The analogy to choice sequences concerns the record's open-ended, stage-wise extension, not literal freedom from law or procedure. Public records remain constrained by admissibility, provenance, and scope, and later entries may conflict with earlier standing rather than merely refine it.}

\subsubsection{The Certificate as a Boundary Object}\label{subsubsec:certificate_boundary}

The two-dimensional structure is instantiated through a certificate $\kappa$. Drawing on Star and Griesemer's (\citeyear{star1989institutional}) concept, I treat the certificate as a constrained \textit{boundary object} connecting internal computational entitlement to public standing. It is sufficiently standardized to support audit and enforcement, yet flexible enough to accommodate regime-specific witnesses and standing conditions, thereby enabling coordination between machine-learning engineering and normative deliberation without requiring full agreement about the broader meaning of the underlying claim.\footnote{Although formally standardized and partly infrastructural, the certificate remains a boundary object because differently situated actors use the same artifact for distinct purposes without sharing a single interpretation of the underlying claim.} 

Categorical output is licensed only when the certificate's witness structure forces the deployed predicate under $\mathcal C$ and the complete certificate satisfies the contract's adequacy requirements, including standing, scope, provenance, case applicability, and contestability.\footnote{Standardized governance artifacts such as FactSheets, Model Cards, and Datasheets already motivate bundling at the documentation level \citep[see][]{arnold2019factsheets, mitchell2019model}. The present proposal shifts this requirement to the interface itself by requiring an auditor-checkable certificate for categorical output and returning \textit{Undetermined} otherwise.} By linking computational forcing to social standing, the certificate also helps resist the ``abstraction traps'' that arise when algorithmic properties are assessed apart from their institutional contexts \citep[see][]{selbst2019fairness}. Figure~\ref{fig:dual_regime} summarizes this structure.

\begin{figure}[H]
\centering
\resizebox{\textwidth}{!}{%
\begin{tikzpicture}[
>=Stealth,
box/.style={draw, rounded corners, align=center, minimum height=2.5cm, minimum width=3.8cm, fill=gray!5},
cert/.style={draw, rectangle, rounded corners, align=center, minimum height=2.5cm, minimum width=4cm, fill=blue!10, thick},
label_text/.style={font=\small\itshape, align=center}
]

\node[cert] (certificate) {
\textbf{Certificate $\kappa$} \\[0.1cm]
\textit{(Boundary Object)} \\[0.2cm]
$\langle \text{witness}; \text{standing} \rangle$
};

\node[box, left=2cm of certificate] (internal) {
\textbf{AI-Internal} \\[0.1cm]
$\langle R,t_{\mathrm{int}} \rangle$ \\[0.2cm]
\textit{Witness} \\
(e.g., bounds $[\ell, u]$)
};

\node[box, right=2cm of certificate] (external) {
\textbf{AI-External} \\[0.1cm]
$\langle S, \tau, t \rangle$ \\[0.2cm]
\textit{Standing} \\
(e.g., record, scope)
};

\node[below=1.5cm of certificate, draw, thick, align=center, fill=green!5, minimum width=4.5cm, rounded corners] (output) {
\textbf{Interface Output} \\
$\mathcal{I}_\mathcal{C}(x) \in \{A, D, U\}$
};

\draw[->, thick] (internal.east) -- (certificate.west) node[midway, above, label_text] {supplies\\[0.1cm]$E_{t_{\mathrm{int}}}^R(\psi)$};
\draw[->, thick] (external.west) -- (certificate.east) node[midway, above, label_text] {governs\\[0.1cm]$W_t^{S,\tau}(\varphi)$};

\draw[->, thick] (certificate.south) -- (output.north) node[midway, right=0.1cm, label_text] {checked under\\contract $\mathcal{C}$};

\end{tikzpicture}%
}
\caption{The certificate as boundary object: internal witnesses and public standing jointly govern interface status under $\mathcal{C}$.}
\label{fig:dual_regime}
\end{figure}

The certificate contains two coordinated components. \textit{Internally}, it includes $R$-checkable model-side artifacts -- such as certified interval bounds, argmax separation margins, or other verifier-accepted witnesses -- that may force $\delta_\varphi(x)$, its contractually specified negative side, or relevant numerical leaf conditions under $\langle R,t_{\mathrm{int}}\rangle$. Where the deployed predicate is compound, the certificate may organize the required witnesses as a structured witness tree. It can also include a calibration or conformal guarantee where $R$ declares the inference from the certified guarantee to $\delta_\varphi(x)$. \textit{Publicly}, the certificate supports $W_t^{S,\tau}(\varphi_x)$ only when it cites eligible record-grounds and satisfies the adequacy requirements of $\mathcal C$.\footnote{Inspection may range from direct public access to institutional review under due-process and confidentiality constraints. Purely record-based tasks, such as authenticated registry lookups or signed-document checks, may establish public standing without nontrivial numerical certification.}

At the interface, the statuses $A$, $D$, and $U$ apply to the warrant-status question $W_t^{S,\tau}(\varphi)$, not to the world-claim $\varphi$. When no certificate forces either side of the deployed predicate while satisfying the adequacy requirements of $\mathcal C$, the interface must return \textit{Undetermined}, together with a reason class identifying the unmet condition.

To state the forcing rules below, I collect the relevant parameters in a deployment contract:
\begin{equation}\label{eq:contract_def}
\mathcal{C} \triangleq \langle \Sigma, R, t_{\mathrm{int}}, B_{\mathrm{budget}}, N_{\mathrm{max}}; S, \tau, t \rangle,
\end{equation}
where $\Sigma$ is the admissible scope, $\langle R,t_{\mathrm{int}}\rangle$ the internal certification regime, $B_{\mathrm{budget}}$ the declared resource budget, $N_{\mathrm{max}}$ the declared refinement horizon, and $\langle S,\tau,t\rangle$ the public-standing regime, with $\tau$ a possibly empty or leaf-indexed threshold specification. Relative to $\mathcal C$, the interface returns $A$ or $D$ only when exactly one side has an adequate forcing certificate; otherwise it returns reason-coded $U$. Thus, $U$ records the contract-relative absence of a uniquely licensed side, not a truth value of the underlying world-claim.

\subsection{Thresholds, Argmax, and Three-Valued Semantics}\label{subsec:threshold_layer}

I now turn to the decision rules that map real-valued model evaluations to categorical outputs through thresholds and argmax. Because these rules mark the point of categorical commitment, they are the natural site for a certified decision wrapper over otherwise standard neural models (see Section~\ref{subsec:hybrid}). For autoregressive LLMs, the same layer can function as a discrete claim-verification head or structural gate, requiring the system to exhibit an adequate forcing certificate before introducing a categorical claim into conversational output.

To isolate the forcing component of entitlement, I hold the public parameters $\langle S,\tau,t\rangle$ fixed throughout this subsection and assume that the relevant adequacy requirements of $\mathcal C$ are satisfied. The interface evaluates $W_t^{S,\tau}(\varphi_x)$ for input $x$. Internally, it seeks $\psi$-level forcing witnesses under $\langle R,t_{\mathrm{int}}\rangle$ for the deployed predicate $\delta_\varphi(x)$, whether implemented through threshold comparison or argmax.

\subsubsection{Forcing for Thresholds}\label{subsubsec:forcing_thresholds}

Standard deployments often treat point estimates as sufficient grounds for assertion or denial. By contrast, an intuitionistic witness semantics locates epistemic commitment at the decision point itself. To assert that a score crosses a threshold is to claim the availability of a construction, without which no warrant is in place.

Consider a scalar-valued model $f:X\rightarrow\mathbb R$ and a fixed rational threshold $\tau\in\mathbb Q$, and write $s(x)=f(x)$. Under this semantics, $s(x)$ is represented as a constructive real, available through successively refined rational approximations.

For each natural number $n$, one can compute a rational approximation $q_n(x)$ together with an explicit error margin $\varepsilon_n(x) > 0$ such that
\[ |s(x) - q_n(x)| \leq \varepsilon_n(x), \]
with $\varepsilon_n(x) \rightarrow 0$ as $n \rightarrow \infty$. At each stage $n$, knowledge of $s(x)$ is thus given by a nested interval
\[ I_n(x) = [\ell_n(x), u_n(x)] = [q_n(x) - \varepsilon_n(x), \, q_n(x) + \varepsilon_n(x)], \]
where $I_{n+1}(x) \subseteq I_n(x)$ and $|I_n(x)| \rightarrow 0$ as $n \rightarrow \infty$. This recovers the choice-sequence picture from Section~\ref{subsubsec:choice}. At any finite stage, $s(x)$ is available under this representation only through a bounded interval.

\medskip
\noindent\textbf{Methodological note.} The interval presentation is an idealized witness semantics, not a claim that deployed systems compute constructive reals. Section~\ref{subsec:hybrid} and SA~B discuss practical surrogates and constraints; for autoregressive systems, the framework presupposes prior identification of claims and deployed predicates (Section~\ref{subsec:claim_routing}).

\medskip
With this idealized witness semantics in place, I treat the threshold predicate
\[ P_\tau(x) \triangleq s(x) \geq \tau \]
as an intuitionistic statement, and let $\delta_\varphi(x) \triangleq P_\tau(x)$ denote the deployed decision predicate. From the BHK perspective, asserting $P_\tau(x)$ amounts to claiming that a construction is available showing that $s(x)$ meets or exceeds the threshold.

Even when the model's numerical output is classically decidable as a floating-point number, the high-stakes entitlement predicate is typically constrained by robustness or other contract requirements, so forcing it still requires a witness. $U$ marks the cases in which none is available within $\mathcal{C}$.

Accordingly, the approximation intervals $I_n(x)$ lead to three epistemically distinct cases, relative to a fixed finite stage $n$ of approximation:
\begin{itemize}
\item \textbf{$A$ (entitled to assert at stage $n$):} $\ell_n(x) \geq \tau$.
\item \textbf{$D$ (entitled to deny at stage $n$):} $u_n(x) < \tau$.
\item \textbf{$U$ (Undetermined at stage $n$):} all remaining cases. 
\end{itemize}

In the undetermined case, neither $P_\tau(x)$ nor its negation is supported by a construction at stage $n$. Together, the three cases first define the threshold status $v_n(P_\tau(x))$. When the warrant-status question is implemented by applying $P_\tau(x)$ to certified bounds, that threshold status induces the corresponding warrant status:
\[
v_n\bigl(W_t^{S,\tau}(\varphi_x)\bigr) \triangleq v_n(P_\tau(x)).
\]
This correspondence remains conditional on the certificate satisfying the adequacy requirements of $\mathcal C$; otherwise, the interface returns $U$. Relative to $\mathcal C$, the stage-wise map $v_n$ records what is forced at refinement level $n$. The deployed map $v$ then searches for a forcing stage within the declared refinement horizon and resource budget:
\[
v\bigl(W_t^{S,\tau}(\varphi_x)\bigr) =
\begin{cases}
A & \text{if } \exists\, n \leq N_{\mathrm{max}} \text{ such that } \mathrm{cost}_R(n) \leq B_{\mathrm{budget}} \text{ and } \ell_n(x) \geq \tau, \\
D & \text{if } \exists\, n \leq N_{\mathrm{max}} \text{ such that } \mathrm{cost}_R(n) \leq B_{\mathrm{budget}} \text{ and } u_n(x) < \tau, \\
U & \text{otherwise.}
\end{cases}
\]

In the single-score, forcing-only setting isolated here, the threshold status is determined by one certified interval history.\footnote{For a fixed scalar score with a sound nested interval history, the two forcing conditions are mutually exclusive. Adequate certificates forcing opposed sides can nevertheless arise in a broader multi-certificate trace; in that case, the output contract in Section~\ref{subsec:output_contract} returns \textit{U-CONFLICT}.} For the deployed interface, the predicate is fixed as $P_\tau(x)\equiv s(x)\geq\tau$, so equality falls on the positive side. Hence, $D$ tracks $\lnot P_\tau(x)$, that is, $s(x)<\tau$. Boundary cases such as $u_n(x)=\tau$ without $\ell_n(x)\geq\tau$ remain \textit{Undetermined}. Alternative conventions, such as strict acceptance or shifted thresholds, belong to contract design.

The scope of denial requires a separate clarification. In the scalar-threshold presentation, $D$ denies the deployed warrant predicate $P_\tau(x)$, not necessarily the underlying world-claim $\varphi_x$. Public denial of $\varphi_x$ depends on either a separate negative warrant predicate or a contractually specified evidential model under which failure of $P_\tau(x)$ warrants $\lnot\varphi_x$.

Distinct from this question of scope is the stability of a forced status under refinement. Under sound nested interval regimes, forced $A$ or $D$ statuses are monotone under refinement and completion-sound. Later numerical refinement cannot revoke them, and they remain sound for every completion compatible with the certified interval history. For non-nested or coverage-based regimes, categorical speech remains certificate-gated, but these stability properties are regime-relative rather than automatic.\footnote{Proofs and caveats appear in SA~A.2--A.3.}

These stability results concern cases in which one side has already been forced. A different question arises when neither side has an adequate certificate: how should the resulting status be understood? Here, $A$, $D$, and $U$ track epistemic entitlement rather than truth values or confidence-scored fallbacks. Abstention is therefore a core component of the interface semantics, not a coverage policy. $U$ marks the absence of an adequate certificate forcing either side of the deployed warrant predicate, and its conditions are fixed ex ante by $\mathcal C$. This distinguishes $U$ both from a reject option tuned to manage the coverage--error trade-off over a classically bivalent target \citep[see][]{geifman2017selective, elyaniv2010foundations}, and from a veracity status or truth-value gap of the kind found in classical three-valued logics or supervaluationism \citep[see][]{lukasiewicz1920logice, fine1975vagueness, keefe2000theories, savcisens2025trilemma}. In short, it concerns public entitlement to treat a claim categorically.

Whereas the scalar setting shows how categorical output can be conditioned on a finite witness for a comparison, selection rules impose a further requirement: one option must be constructively separated from its competitors. I therefore turn next to argmax decisions, where entitlement depends on separation margins rather than scalar threshold bounds.\footnote{Top-$k$ variants are treated in SA~A.4.}

\subsubsection{Argmax Forcing}\label{subsubsec:argmax_forcing}

For multiclass argmax decisions, a categorical output does not merely assert that one score is high enough. It singles out one option by excluding its competitors. Let a model output logits $z_1(x), \ldots, z_J(x)$, each represented at stage $n$ by a certified interval $I_n^i(x)=[\ell_n^i(x),u_n^i(x)]$. Classically, the decision rule simply selects $\argmax_i z_i(x)$. Under the present semantics, categorical selection of index $i$ requires a certificate forcing it to be the unique maximizer,
\[
M_i(x)\triangleq \bigl(\forall j\neq i\,[z_i(x)>z_j(x)]\bigr).
\]

A sufficient operational witness for $M_i(x)$ is strict certified separation: for some stage $n$,
\[
\ell_n^i(x)>u_n^j(x)\quad\text{for every }j\neq i.
\]
This is the argmax analogue of threshold forcing. Once strict separation is obtained under a sound nested regime, later refinement preserves it by the same endpoint-containment reasoning used in the threshold case.\footnote{Interval-based regimes such as Interval Bound Propagation \citep[see][]{gowal2018effectiveness} support this form of certification by propagating sound bounds through the network.}

To keep the $A/D/U$ semantics uniform with the binary threshold case, these statuses apply to indexed warrant-status questions. In the multiclass setting, the interface evaluates the family $\{W_t^{S,\tau}(\varphi_x^i)\}_{i=1}^J$, one for each candidate label $i$. 

For label $i$, the output rule returns $i$ exactly when the trace contains an adequately forcing certificate for $M_i(x)$ and no adequately forcing certificate for its contractually specified negative side. If no adequate certificate forces a unique label, an authorized top-$k$ set, or a queried-label denial -- or if adequate certificates force opposed sides -- the interface returns $U$ with the applicable reason class. Failure to force $M_i(x)$ is not itself a denial of label $i$. It shows only that categorical selection is unlicensed under the available certificate, not that no classical maximizer exists.

To carry these decision-layer primitives into a compound political claim, the following case study organizes threshold and selection witnesses compositionally within a structured witness tree.

\subsection{Tooth Social: A Political Chatbot and Its Structured Witness Tree}\label{subsec:tooth_social}

The following fictional case, loosely inspired by the \textit{Teapot Dome} scandal of the 1920s, recasts the bribery and oil-lease controversy in a contemporary setting. It is purely illustrative, not a proposal to deploy quasi-legal verdict systems.

Suppose that an energy minister, Brat Fable, has authority over the United States' national strategic oil reserves. A journalist reports that Fable has leased valuable reserves on highly favorable terms, while rumors circulate that company executives have provided him with loans and gifts. Shortly thereafter, a congressional inquiry is announced. The controversy concerns whether a law has been broken and, more broadly, whether Fable's conduct amounts to serious corruption that compromises his integrity as an officeholder and undermines his democratic legitimacy.

Seeking clarity on the unfolding scandal, many citizens consult a hypothetical AI assistant, \textit{Tooth Social}. For a citizen query represented by $x$, the assistant evaluates the warrant-status question $W_t^{S,\tau}(\varphi_x)$ for the world-claim $\varphi_x$: \textit{``The minister engaged in serious corruption in the awarding of oil leases.''} Categorical entitlement depends on whether the trace contains an adequate forcing certificate for exactly one side of the deployed predicate. Surface responses must accordingly report the status under $\langle S,\tau,t\rangle$ and point either to the supporting certificate or, in a $U$-case, to the reason-coded trace identifying why categorical output is unlicensed.

Under the declared standard $S$, the operative warrant structure decomposes into the record, magnitude, and inferential claims represented in the structured witness tree displayed in Figure~\ref{fig:tooth_tree}, whose root is the compound deployed predicate $\mathrm{Corrupt}_S(x)$. The root existentially quantifies over officials $o$, official acts $a$, and benefit-transfers $b$ admissible under $\Sigma$, while the leaves state the required conjuncts for a candidate witness tuple $\langle o^\ast,a^\ast,b^\ast\rangle$. Favorability of the act and materiality of the transfer are treated as separate scalar gates. Here, $\tau$ denotes the contractually declared threshold specification and may be indexed to designated scalar leaves. The shape of the decomposition is declared by $S$, while its numerical leaf thresholds are supplied by $\tau$; both are fixed through audited institutional policy, not by the model. For scalar leaves, certified intervals produced under $\langle R,t_{\mathrm{int}}\rangle$ may be refined with increasing $n$, while changes in the public record require recomputation against the updated evidence.\footnote{The formal decomposition appears in SA~E.4.1; SA~E.4.3 traces its leaf-level propagation through the two stages below.}

As introduced in Section~\ref{subsubsec:int_logic}, the BHK discipline treats the forcing component of entitlement compositionally rather than treating warrant as attaching to the claim as an undifferentiated whole. Conjunctive forcing calls for a forcing witness for every conjunct, whereas existential forcing requires an exhibited witness tuple. Categorical warrant additionally depends on the adequacy under $\mathcal C$ of the complete certificate containing that structured witness tree, with each incorporated leaf component passing the checks relevant to its role. The interface returns $U$ when neither side has an adequate forcing certificate or when adequately forcing certificates support both sides.

\begin{figure}[H]
\centering
\resizebox{\textwidth}{!}{%
\begin{tikzpicture}[
>=Stealth,
root/.style={draw, thick, rounded corners, align=center, minimum height=0.9cm, minimum width=5.2cm, fill=blue!8},
conj/.style={draw, thick, circle, align=center, minimum size=0.9cm, fill=gray!10},
record/.style={draw, rounded corners, align=center, minimum height=0.9cm, minimum width=2.7cm, fill=gray!8},
magnitude/.style={draw, rounded corners, align=center, minimum height=0.9cm, minimum width=2.7cm, fill=blue!6},
contested/.style={draw=orange!85!black, very thick, rounded corners, align=center, minimum height=0.9cm, minimum width=2.7cm, fill=orange!8}
]

\node[root] (root) {$\mathrm{Corrupt}_S(x)$ \quad $\exists\, o,a,b$};
\node[conj, below=1.0cm of root] (and) {$\wedge$};

\node[record, below left=1.7cm and 3.6cm of and] (off) {$\mathrm{Official}(o^\ast)$};
\node[record, right=0.35cm of off] (act) {$\mathrm{Act}(o^\ast,a^\ast)$};
\node[record, right=0.35cm of act] (recv) {$\mathrm{Recv}(o^\ast,b^\ast)$};
\node[magnitude, right=0.35cm of recv] (fav) {$\mathrm{Fav}(a^\ast)$};
\node[magnitude, right=0.35cm of fav] (matl) {$\mathrm{Matl}(b^\ast)\!\geq\!\theta$};
\node[contested, right=0.35cm of matl] (qpq) {$\mathrm{QPQ}(b^\ast,a^\ast)$};

\draw[->, thick] (root) -- (and);
\draw[->, thick] (and) -- (off);
\draw[->, thick] (and) -- (act);
\draw[->, thick] (and) -- (recv);
\draw[->, thick] (and) -- (fav);
\draw[->, thick] (and) -- (matl);
\draw[->, thick] (and) -- (qpq);

\node[font=\footnotesize\itshape, text=orange!85!black, below=0.5cm of qpq, align=center] {contested leaf:\\forced by institutional finding};

\end{tikzpicture}%
}
\caption{Structured witness tree for $\mathrm{Corrupt}_S(x)$. An exhibited tuple $\langle o^\ast,a^\ast,b^\ast\rangle$ and witnesses for all conjuncts compositionally force the root. Gray leaves denote public-record components, blue leaves scalar threshold gates, and the orange leaf the institutional finding supporting the contested inferential link.}
\label{fig:tooth_tree}
\end{figure}

Composition introduces an asymmetry between assertion and denial. A system can force the compound by exhibiting a witness tuple
$\langle o^\ast,a^\ast,b^\ast\rangle$ and supplying witnesses that establish
each required conjunct, but it cannot deny the compound from the absence of such a tuple. Denial must instead run through a contractually specified negative side, the interface-level counterpart of constructive $\lnot\exists$.\footnote{See SA~E.4.4.} A confidence-threshold classifier may deny when an aggregate score falls confidently below a root threshold, whereas constructive denial is licensed only by an adequate certificate whose witness forces the contractually specified negative side. Scalar comparison remains operative for atomic magnitude leaves, such as below-market favorability or materiality, where one-dimensional comparison against a leaf-specific threshold declared under $\langle S,\tau,t\rangle$ is the appropriate witness form.

With the structured witness tree fixed, the case study can track entitlement across two stages of the unfolding scandal. Let $v(W_t^{S,\tau}(\varphi_x))$ denote the deployed interface output. A certified interval used at a scalar leaf may support a $\psi$-level model-claim, but it does not itself enter the public record as a new item of evidence. Scalar leaf forcing follows the threshold rule in Section~\ref{subsubsec:forcing_thresholds}; the root status is determined compositionally in conjunction with the output contract in Section~\ref{subsec:output_contract}. Accordingly, an $A$-output licenses a categorical speech act under declared standards and checkable warrants. It does not introduce a new fact about the world.

\medskip
\noindent\textbf{Stage 1: Undetermined Status and Mandatory Abstention}

\medskip
\noindent At the outset of the scandal, suppose a citizen asks the AI assistant:
\begin{quote}
\textbf{Citizen:} \textit{``Has the minister behaved corruptly in the oil leases affair?''}
\end{quote}

A conventional system might compress the record into a confidence score and return a hedged but effectively binary judgment, such as \textit{``It is likely that \dots''}. Such an output treats the moral and political proposition as sufficiently settled for categorical assertion, although the decisive quid-pro-quo link has not yet been certified.

Under the Brouwerian constructive regime, \textit{Tooth Social} proceeds differently. At $t_1$, the officeholding and official-act leaves are already forced by authenticated records. The receipt leaf remains unsupported, however, because the available transfer documentation is incomplete. Lease-valuation evidence sufficient to force the favorability leaf is also absent. The certified materiality interval exists but spans $\theta$, and is
therefore non-forcing. Most decisively, the record contains no eligible institutional finding forcing the quid-pro-quo relation declared by $S$. Consequently, the interface returns $U$. No exhibited certificate contains a witness tree forcing every required component of the declared warrant structure, and the record likewise contains no defeater or exclusion construction forcing the contractually specified negative side.

The materiality gate illustrates the scalar case. Under the declared specification $\tau$, the materiality threshold is $\theta=500{,}000$ dollars.
At $t_1$, the certified valuation of the alleged transfer $b^\ast$ yields the
leaf interval
\[
I_{1,\mathrm{Matl}}^{t_1}(b^\ast)=[300{,}000,\,800{,}000]\ \text{dollars}.
\]
Because $300{,}000<\theta\leq 800{,}000$, the interval forces neither
$\mathrm{Matl}(b^\ast)\geq\theta$ nor its specified negative side,
$\mathrm{Matl}(b^\ast)<\theta$. Retrieved transfer records and provenance metadata supply inputs to the materiality component, but the resulting interval does not force either side of that leaf comparison.\footnote{See SA~B.}

A response consistent with the Brouwerian assertibility constraint might read as follows:
\begin{quote}
\textbf{Tooth Social:} \textit{``On the information currently available, I am not entitled to assert either that the minister engaged in serious corruption or that he did not. Although there are serious allegations and some supporting reports, the current record does not yet supply eligible public grounds sufficient to establish the required link between the transfers and the favorable lease terms. The current valuation evidence also does not settle whether the transfer meets the declared materiality threshold.''}
\end{quote}

Alongside this response, the certificate report records the $U$ status and the relevant reason classes. In structured form, it localizes $U$ to the unforced quid-pro-quo leaf, the materiality leaf whose certified interval spans $\theta$, and any remaining unsupported record or magnitude leaves.\footnote{See SA~E.3 for the certificate report and SA~E.4.3 for the leaf-level trace.}

Abstention remains informative. The system can explain why the status is $U$ and identify which public grounds -- such as audited contracts, correspondence, testimony, or inquiry findings -- could change it. By orienting users toward the missing conditions of public reason, the interface preserves space for citizens to exercise their own epistemic agency as the scandal unfolds, without presenting the matter as already settled.

\medskip
\noindent\textbf{Stage 2: New Public Grounds and Status Shift}

\medskip
\noindent In the next phase, a congressional committee releases a detailed report based on subpoenaed documents, internal correspondence, and sworn testimony. The report concludes that Brat Fable accepted substantial payments in exchange for favorable lease terms. \textit{Tooth Social} registers the report as a new item in the public record and updates its analysis accordingly.

Within the complete certificate, the authenticated committee finding supplies the witness component that forces the quid-pro-quo leaf under $S$. The subpoenaed transfer records supply the witness component that forces the previously unsupported receipt leaf and also allow the materiality bound to be recomputed against the updated record and refined:
\[
I_{5,\mathrm{Matl}}^{t_2}(b^\ast)=[760{,}000,\,840{,}000]\ \text{dollars}.
\]
Since its lower bound satisfies $760{,}000\geq\theta$, the interval forces the materiality leaf under the threshold rule. Lease-valuation evidence from the subpoenaed documents and associated records likewise supplies the scalar witness for below-market favorability. With the witness tuple $\langle o^\ast,a^\ast,b^\ast\rangle$ exhibited and the leaf-level components forcing every required conjunct, the structured witness tree compositionally forces the positive side. Because the complete certificate incorporating that tree is adequate under $\mathcal C$ and the trace lacks any adequately forcing certificate for the contractually specified negative side, the output contract yields
\[
v\bigl(W_{t_2}^{S,\tau}(\varphi_x)\bigr)=A.
\]
The move from $U$ to $A$ thus results from the acquisition of contestable warrant for the previously unforced components of the declared standard, including the leaf-level threshold witnesses, not from a shift in an aggregate corruption score. Under $\langle S,\tau,t_2\rangle$, the updated public record licenses the speech act of asserting $\varphi_x$, while the underlying world-directed claim remains open to further contestation as new grounds emerge.

Since the decomposition is specified in the contract rather than left implicit in the model, a challenger can contest a named element of the standard -- for example, a missing conjunct, a misclassified defeater, or an indefensible $\theta$ -- instead of confronting an opaque aggregate score. This relocates the contested-predicate problem into a form democratic and legal processes can handle.

A licensed response might read:
\begin{quote}
\textbf{Tooth Social:} \textit{``Under the available information, including the congressional committee report with contracts, internal correspondence, and sworn testimony, the allegation is now licensed for assertion (adequate forcing certificate attached).''}\footnote{The accompanying report records the adequate forcing certificate and leaf-level trace (SA~E.3 and SA~E.4.3).}
\end{quote}

Denial follows the same certificate discipline but not the same witness structure. Evidence refuting one candidate tuple does not by itself deny the existential predicate, since another admissible tuple might still satisfy the declared conjunction. The interface may return $D$ only when an adequate certificate forces the contractually specified negative side -- for example, by establishing a declared defeater or by supplying a construction sufficient under $S$ to rule out every admissible corruption route. Only where the public-standing regime includes an explicit negative-warrant clause may that
status be surfaced as warrant for $\lnot\varphi_x$.\footnote{See SA~E.4.4.}

Of course, the example is deliberately idealized. It represents a complex moral and political controversy through the compound deployed predicate $\mathrm{Corrupt}_S(x)$, whose warrant structure is decomposed into leaf claims supported by certificate components and assessed compositionally. The claim is not that democratic deliberation can be reduced to such predicates, but that systems introducing hard decision points in high-stakes domains require a discipline for categorical output. Under the Brouwerian constraint, when the trace does not uniquely license either side, the interface has only one permissible status: $U$.

Far from a failure state, $U$ functions as a deliberative safeguard.\footnote{This does not license technocratic gatekeeping. The constraint concerns only which outputs the system may present as settled assertions. In practice, $U$ can be paired with descriptive information, record references, and routes for contestation.} By withholding categorical speech, it leaves contested world-claims open to inquiry, counterevidence, and public reasoning. Where no shared decomposition exists -- as in disputes over hate speech, disinformation, or contested political characterization -- the case for restraint is stronger still.\footnote{See SA~E.4.6.} This leaves a broader question: what must a certificate contain before $A$ or $D$ is licensed?

\subsection{Certificates and the Output Contract}\label{subsec:output_contract}

I now state the certificate requirement for $A/D$ outputs as an explicit output contract. A certificate $\kappa$ is a finite, structured, inspectable basis for an entitlement claim,\footnote{Justification logic makes justifications explicit through formulas such as $t\!:\!F$ \citep{artemov2008logic}, while proof-carrying code requires code to carry a checkable proof of compliance with a declared safety policy \citep{necula1997proof}. The present framework instead gates public speech through certificates assessed for standing, provenance, and contestability, with reason-coded abstention when licensing fails.} represented schematically as
\begin{equation}\label{eq:certificate_tuple}
\kappa \triangleq \langle \text{type}, \text{claim}, \text{witness}; \text{assumptions}, \Sigma_\kappa, t, t_{\mathrm{int}}, \text{prov} \rangle.
\end{equation}

The witness field may be structured and include the leaf-level witness components needed for a compound claim. Assessment of the tuple distinguishes checkability, adequacy, and forcing: categorical output requires adequacy and forcing, while adequacy presupposes checkability. A certificate is checkable under $\mathcal C$ when an auditor-accessible procedure $\mathrm{Check}_{\mathcal C}(\kappa)$ verifies (i) witness validity under $\langle R,t_{\mathrm{int}}\rangle$ and (ii) the public-standing fields under $\langle S,\tau,t\rangle$, including the assumptions, scope, time indices, and provenance needed to place the claim in the public record.

Adequacy adds case-specific fit. A certificate counts as \emph{adequate} under $\mathcal{C}$ for a case $x$ only if $\mathrm{Check}_\mathcal{C}(\kappa)$ accepts, its claim and assumptions apply to $x$, and its declared scope is admissible for $\Sigma$. I capture this combined requirement by the adequacy condition $\mathrm{Adeq}_\mathcal{C}(\kappa,x)$, for which SA~D.2 provides a full checklist. Adequacy is local to a certificate and case; trace-level consistency is enforced by the output rule below. Budget and horizon compliance are part of adequacy under $\mathcal{C}$. A certificate whose production exceeds $B_{\mathrm{budget}}$ or requires
refinement beyond $N_{\mathrm{max}}$ is not adequate for $x$ under the deployed contract, whatever the intrinsic quality of its witness.

Forcing concerns whether the witness component settles the deployed predicate or its contractually specified negative side under the regime declared by $\mathcal C$. Public standing and case applicability belong to adequacy. The relation
\[
\kappa \Vdash_{\mathcal C}\delta_\varphi(x)
\quad\text{or}\quad
\kappa \Vdash_{\mathcal C}\lnot\delta_\varphi(x)
\]
records this forcing status. A certificate is \emph{adequately forcing} for $x$ only when
\[
\mathrm{Adeq}_{\mathcal C}(\kappa,x)
\land
\kappa\Vdash_{\mathcal C}\delta_\varphi(x),
\]
or when the corresponding conjunction holds for $\lnot\delta_\varphi(x)$.

Contract-relative forcing is regime-sensitive. In nested interval regimes, $\Vdash_{\mathcal C}$ carries the monotonic force proved in SA~A.2. In structured evidential or institutional settings, it records that the exhibited witness settles the relevant deployed predicate or leaf condition under the declared inferential rules.

More concretely, within $\mathcal C$, $R$ fixes the admissible finite witnesses, their production or propagation rules, the deployed arithmetic, and a checker $V_R$. A bound $s(x)\in[\ell,u]$ or separation witness $\ell_i>u_j$ forces the relevant comparison only under $R$'s declared precision, rounding, scaling, and input-variation assumptions, and licenses categorical output only when incorporated into an adequate certificate within the contract's budget and horizon. Such artifacts are speech-act gates, not confidence support.\footnote{Technical details appear in SA~C.} 

Since these artifacts gate public categorical speech, their authority cannot rest on unilateral system attestation. Categorical outputs make claims to public validity that must be redeemable before others. In high-stakes domains, certificates must therefore be auditor-checkable under public or jointly governed access rules. The trace contract must likewise preclude selective disclosure. If a mandatory field is withheld or cannot be verified under the access rules specified in $\mathcal C$, the interface must return $U$ with the relevant reason class.\footnote{See SA~D.2.} A declared contestability route specifies who may request re-examination, which elements may be challenged, and how successful challenges revise the recorded status.\footnote{See SA~D.3.}

Operationally, the Brouwerian decision layer implements the following output contract. For each input $x$, the interface returns a tuple for the warrant-status question, where $v_x \triangleq v(W_t^{S,\tau}(\varphi_x))$:
\begin{equation}\label{eq:output_contract}
\mathcal{I}_\mathcal{C}(x) = \langle v_x, \mathrm{trace}_x \rangle, \qquad v_x \in \{A,D,U\}.
\end{equation}

Here $\mathrm{trace}_x$ is the finite, effectively enumerable record of
certificates, checks, and reason payloads produced under $\mathcal C$. The
contract fixes $\mathrm{Adeq}_{\mathcal C}(\kappa,x)$ and $\kappa\Vdash_{\mathcal C}\chi$ for the deployed predicate or its specified negative side; $\kappa\in\mathrm{trace}_x$ abbreviates occurrence in the trace's certificate field. Accordingly, the quantifiers below range only over certificates exhibited within $\mathcal C$'s budget and refinement horizon.

Formally, let
\[
A_x \equiv \exists \kappa \in \mathrm{trace}_x
\bigl(\mathrm{Adeq}_\mathcal{C}(\kappa,x)
\land \kappa \Vdash_\mathcal{C} \delta_\varphi(x)\bigr),
\]
and
\[
D_x \equiv \exists \kappa \in \mathrm{trace}_x
\bigl(\mathrm{Adeq}_\mathcal{C}(\kappa,x)
\land \kappa \Vdash_\mathcal{C} \lnot\delta_\varphi(x)\bigr).
\]
Then
\[
v_x =
\begin{cases}
A & \text{if } A_x \land \lnot D_x, \\
D & \text{if } D_x \land \lnot A_x, \\
U & \text{if } \lnot A_x \land \lnot D_x, \\
U & \text{if } A_x \land D_x \quad \text{\normalfont(\textit{U-CONFLICT})}.
\end{cases}
\]
Call this the \emph{uniqueness condition}; the trace supports exactly one side with an adequate forcing certificate. Here, $\lnot\delta_\varphi(x)$ denotes the contractually specified negative side of the deployed predicate, not the mere failure to force $\delta_\varphi(x)$.

Failure of the uniqueness condition can arise from distinct sources. The trace must therefore report at least one interpretable reason class whenever the interface returns $U$; the applicable classes may co-occur:\footnote{The applicable class must be displayed with the $U$ status, while aggregate logs preserve class counts for detecting systematic abstention, such as persistent compute- or adequacy-limited patterns. The analogy with Brouwer's cases (3)--(4) concerns entitlement only: neither \textit{U-COMPUTE} nor \textit{U-MODEL} asserts global undecidability of the underlying world-claim.}

\begin{itemize}
\item \textit{U-EVIDENCE}\textbf{:} eligible public grounds are absent or insufficient to warrant either categorical move under $\langle S,\tau,t \rangle$.
\item \textit{U-ADEQUACY}\textbf{:} one or more adequacy requirements of $\mathcal C$ fail or cannot be verified, including standing, scope, provenance, case applicability, contestability, assumption fit, identity, or eligibility.
\item \textit{U-MODEL}\textbf{:} certified bounds or margins do not force $\delta_\varphi(x)$ or its contractually specified negative side under $\langle R,t_{\mathrm{int}}\rangle$.
\item \textit{U-COMPUTE}\textbf{:} the permitted evaluation exhausts its
search, refinement, or verification resources before exhibiting, within
$B_{\mathrm{budget}}$ and $N_{\mathrm{max}}$, a certificate whose witness is verified to force either side; this does not imply that no forcing witness
exists beyond the contract's limits.
\item \textit{U-CONFLICT}\textbf{:} the trace contains an adequate certificate forcing \(\delta_\varphi(x)\) and an adequate certificate forcing its contractually specified negative side.
\end{itemize}

To support contestation rather than technocratic substitution, the trace must present its reason class and blocking conditions in a form usable by auditors and affected users.\footnote{For an institutional analogue, see Regulation B's requirement that adverse-action notices provide, or make available upon request, specific reasons for the decision, 12 C.F.R. \S\S~1002.9(a)(2), 1002.9(b)(2); see also \citealp{wang2020using}.}

Beyond this public-facing requirement, the uniqueness condition raises a deeper formal question. Can a certificate-sound binary interface be total on its declared scope without thereby presupposing witnessed decidability?

\subsection{Refutation-Sound Binary Interfaces Imply Witnessed Decidability}\label{subsec:design_lemma}

Under the refutation-soundness condition below, any binary-total, certificate-sound $A/D$ interface for a predicate $p$ yields a witnessed decision procedure on $\Sigma$ (Proposition~\ref{prop:binary_decidability}).
Read through BHK semantics, a binary output commits the interface to $p(x)\vee\lnot p(x)$ and thus requires an adequate forcing certificate for the selected side. Accordingly, $U$ is mandatory when neither side has such a certificate -- because of non-forcing, failed adequacy, or resource
exhaustion -- or when both sides do.

The lemma relies on the operational assumptions fixed by the output contract: $\mathrm{trace}_x$ is finite and effectively enumerable, and the relevant adequacy and forcing checks are decidable under the declared contract $\mathcal C$.

\begin{definition}[Binary totality and certificate soundness]
Fix $\mathcal C$ and let $p$ be a deployed predicate on $\Sigma$. In applications to the warrant-status question $W_t^{S,\tau}(\varphi_x)$, $p(x)$ is instantiated by $\delta_\varphi(x)$. The results below assume that the contractually specified negative side is \emph{refutation-sound} for $p$ on $\Sigma$: whenever an adequate certificate forces that side, it constructively establishes $\lnot p(x)$. Within this subsection, the notation $\lnot p(x)$ is reserved for a negative side satisfying this condition. Contracts that instead pair $p$ with a logically independent denial predicate fall outside the scope of the proposition. For such contracts, binary totality yields witnessed two-sided resolution rather than a decision procedure for $p$.

An interface
\[
\mathcal{I}_{\mathcal{C}}(x)=\langle v_x,\mathrm{trace}_x\rangle
\]
is binary-total for $p$ on $\Sigma$, relative to $\mathcal C$, if for every $x\in\Sigma$ the contract-level evaluation terminates within the resources and refinement horizon fixed by $\mathcal C$ and returns
\[
v_x\in\{A,D\}.
\]
The interface is certificate-sound for $p$ on $\Sigma$, relative to $\mathcal C$, if for every $x\in\Sigma$ its categorical outputs satisfy the output contract of Section~\ref{subsec:output_contract}. Every $A$-output is backed by some adequate certificate in $\mathrm{trace}_x$ forcing $p(x)$, and every $D$-output by some adequate certificate in $\mathrm{trace}_x$ forcing the refutation-sound negative side and thereby establishing $\lnot p(x)$.
\end{definition}

Under the specified witness rules, the principal decision predicates developed above satisfy this refutation-soundness condition.
For the threshold predicate, the condition is satisfied because a witness with $u_n(x)<\tau$ establishes $s(x)<\tau$, and hence $\lnot P_\tau(x)$. Likewise, the per-label argmax predicate is refutation-sound when the contract requires a queried-label denial witness exhibiting some $j\neq i$ with $u_n^i(x)<\ell_n^j(x)$, thereby establishing $\lnot M_i(x)$ (SA~A.4).\footnote{Refutation-soundness concerns the deployed warrant predicate $p$, not the underlying world-claim $\varphi_x$. In the \textit{Tooth Social} case, an adequate certificate may constructively establish $\lnot p(x)$ by forcing the contractually specified refutation condition for the declared compound warrant structure, while surfacing that status as public warrant for $\lnot\varphi_x$ still requires the separate negative-warrant clause specified by $S$.}

If a trace contains adequate forcing certificates for both $p(x)$ and $\lnot p(x)$, the output contract requires the interface to return $U$ with reason class \textit{U-CONFLICT}. Consequently, a binary-total, certificate-sound interface satisfying the contract has no conflict traces on $\Sigma$. The notation $\mathrm{trace}_x$ emphasizes that categorical output depends on certificates actually exhibited under $\mathcal C$, not on merely possible certificates.

With the relevant notions and the refutation-soundness condition now fixed, the central design result follows.

\begin{proposition}[Design lemma: contract-respecting binary interfaces imply witnessed decidability]\label{prop:binary_decidability}
Any binary-total, certificate-sound interface for $p$ on $\Sigma$, relative to $\mathcal{C}$, yields a witnessed decision procedure for $p$ on $\Sigma$. Specifically, it yields a terminating procedure that returns one side together with an adequate forcing certificate for that side.
\end{proposition}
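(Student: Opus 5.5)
The plan is to build the decision procedure directly from the interface and let the output contract do the rest. Given $x\in\Sigma$, the procedure runs the contract-level evaluation $\mathcal I_{\mathcal C}(x)$. By binary totality this evaluation terminates within $B_{\mathrm{budget}}$ and $N_{\mathrm{max}}$ and returns $v_x\in\{A,D\}$ together with the finite trace $\mathrm{trace}_x$. The procedure then outputs the side named by $v_x$, namely $p(x)$ if $v_x=A$ and $\lnot p(x)$ if $v_x=D$. Alongside that side it outputs a certificate $\kappa\in\mathrm{trace}_x$ that is adequately forcing for it.

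The key step is extracting that certificate effectively, rather than merely knowing that one exists. I would proceed in three moves.

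\begin{enumerate}
\item Certificate soundness guarantees existence. If $v_x=A$, then $A_x$ holds, so some $\kappa\in\mathrm{trace}_x$ satisfies $\mathrm{Adeq}_{\mathcal C}(\kappa,x)\land\kappa\Vdash_{\mathcal C}p(x)$. The case $v_x=D$ is symmetric, with $D_x$ and $\lnot p(x)$.
\item Search the trace. The operational assumptions stated before the definition say that $\mathrm{trace}_x$ is finite and effectively enumerable, and that adequacy and forcing are decidable under $\mathcal C$. So the procedure enumerates the trace and tests each certificate for adequacy and for forcing of the selected side.
\item Conclude. The search terminates because the trace is finite, and by the first move it succeeds. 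It therefore returns an explicit witness, not just an existence claim, which is exactly what the BHK reading of $p(x)\vee\lnot p(x)$ requires: a proof of one side plus an indication of which.
\end{enumerate}

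Next I would check that the returned side is coherent and genuinely constructive. Binary-total, certificate-sound interfaces have no conflict traces on $\Sigma$, as remarked just before the proposition. The reason is that a trace with $A_x\land D_x$ forces $U$ (\textit{U-CONFLICT}), which contradicts $v_x\in\{A,D\}$. Hence the selected side is the unique one the trace supports.

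For $D$-outputs, the refutation-soundness assumption gives more than a forced contractual negative side. The forcing certificate constructively establishes $\lnot p(x)$ itself. This is exactly what upgrades ``witnessed two-sided resolution'' to a decision procedure for $p$. Together these give a total, terminating map on $\Sigma$ that decides $p(x)\vee\lnot p(x)$ with witnesses.

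I expect the main obstacle to be a subtlety rather than a calculation. Certificate soundness as defined is a property of outputs, stated classically as ``some adequate certificate exists''. The constructive reading needs that existence to be realized by an effective search. Everything therefore hinges on the decidability of $\mathrm{Adeq}_{\mathcal C}$ and $\Vdash_{\mathcal C}$ and on the finiteness of the trace.

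I would make this dependence explicit and note where it would break. If adequacy were only semi-decidable, finiteness of the trace would still suffice, provided certificate soundness is read constructively. The simpler route is to state the lemma relative to the contract's decidability assumptions, and I would take that route.
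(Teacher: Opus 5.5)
Your proposal is correct and matches the paper's proof essentially step for step: run the binary-total evaluation, and use certificate soundness to guarantee an adequate forcing certificate in the finite, effectively enumerable trace. You then extract that certificate by searching with the decidable adequacy and forcing checks, and invoke refutation-soundness in the $D$-case to obtain $\lnot p(x)$. Your added remark on the absence of conflict traces is not needed for the proof (the paper states it just before the proposition), but it does no harm.
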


\begin{proof}
Define the induced procedure $D_{\mathcal C}$ as follows. Given $y\in\Sigma$, run the contract-level evaluation of $\mathcal I_{\mathcal C}(y)$. By binary totality relative to $\mathcal C$, this evaluation terminates within the declared resources and refinement horizon and returns $v_y\in\{A,D\}$.

If $v_y=A$, certificate soundness guarantees that $\mathrm{trace}_y$ contains an adequate certificate forcing $p(y)$. Since $\mathrm{trace}_y$ is finite and effectively enumerable, and the relevant adequacy and forcing checks are decidable under $\mathcal C$, $D_{\mathcal C}$ can search the trace and select, for example, the first such certificate. It then outputs the $p$-side together with that certificate.

If $v_y=D$, certificate soundness guarantees that $\mathrm{trace}_y$ contains an adequate certificate forcing the contractually specified negative side. By the refutation-soundness hypothesis, that certificate constructively establishes $\lnot p(y)$. Since $\mathrm{trace}_y$ is finite and the relevant checks are decidable under $\mathcal C$, $D_{\mathcal C}$ can search the trace, select such a certificate, and output the $\lnot p$-side together with it.

Thus, for every $y\in\Sigma$, $D_{\mathcal C}$ terminates and returns one side together with an adequate forcing certificate for that side, exhibited in the finite trace under the declared contract. Therefore, $D_{\mathcal C}$ is a witnessed decision procedure for $p$ on $\Sigma$, relative to $\mathcal C$.
\end{proof}

\begin{corollary}[Mandatory abstention without a uniquely licensed side]
Fix $\mathcal C$ and let $x\in\Sigma$, with $A_x$ and $D_x$ defined as in Section~\ref{subsec:output_contract}. If the permitted trace supports neither side,
\[
\lnot A_x \land \lnot D_x,
\]
then no certificate-sound interface may return $A$ or $D$ at $x$. If the interface is total over $\{A,D,U\}$, it must return $U$ with the applicable reason class. If the permitted trace supports both sides,
\[
A_x \land D_x,
\]
the interface must likewise return $U$, with reason class \textit{U-CONFLICT}. Thus, $U$ is mandatory exactly when the permitted trace does not support exactly one side with an adequate forcing certificate.
\end{corollary}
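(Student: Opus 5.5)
The plan is to read the corollary directly off the output contract of Section~\ref{subsec:output_contract}, with Proposition~\ref{prop:binary_decidability} used only to make explicit what a categorical output commits the interface to. The key observation is that certificate soundness is a constraint on each categorical output separately. An $A$-output at $x$ requires an adequate certificate in $\mathrm{trace}_x$ forcing $\delta_\varphi(x)$, which is exactly $A_x$. A $D$-output requires an adequate certificate forcing the contractually specified negative side, which is exactly $D_x$. So the corollary reduces to a case analysis on the truth values of $A_x$ and $D_x$. Both are decidable, because $\mathrm{trace}_x$ is finite and effectively enumerable and $\mathrm{Adeq}_{\mathcal C}$ and $\Vdash_{\mathcal C}$ are decidable under $\mathcal C$. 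This decidability is what licenses the case split constructively, rather than through an appeal to excluded middle.

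\emph{Case $\lnot A_x\land\lnot D_x$.} Suppose a certificate-sound interface returned $A$ at $x$. Then soundness would supply some $\kappa\in\mathrm{trace}_x$ with $\mathrm{Adeq}_{\mathcal C}(\kappa,x)\land\kappa\Vdash_{\mathcal C}\delta_\varphi(x)$, which is a witness for $A_x$ and contradicts $\lnot A_x$. The argument against returning $D$ is symmetric, using $\lnot D_x$. If the interface is total over $\{A,D,U\}$, the only remaining value is $U$. The requirement to attach a reason class then follows from the clause that every $U$ carries at least one interpretable class. One of \textit{U-EVIDENCE}, \textit{U-ADEQUACY}, \textit{U-MODEL} or \textit{U-COMPUTE} applies, according to which condition in the trace blocked forcing.

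\emph{Case $A_x\land D_x$.} Here the contract's case definition of $v_x$ assigns $U$ with \textit{U-CONFLICT} outright. Soundness in the sense of the Definition includes satisfying the output contract, so an $A$ or $D$ output at a conflict trace violates it.

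For the concluding ``exactly when'', I would show both directions. If exactly one of $A_x,D_x$ holds, the contract assigns the corresponding categorical status, so $U$ is not mandatory there. Otherwise one of the two cases above applies. I expect the main obstacle to be interpretive rather than technical. Certificate soundness, as stated in the Definition, only requires each $A$ output to be backed by an adequate forcing certificate, and so does not by itself exclude returning $A$ when $A_x\land D_x$. I would therefore appeal explicitly to the clause that certificate-sound outputs ``satisfy the output contract'', including its \textit{U-CONFLICT} row, rather than to the backing condition alone. I would also remark that Proposition~\ref{prop:binary_decidability} explains why no tuning can remove this abstention: a binary output on every $x\in\Sigma$ would amount to a witnessed decision procedure that the trace does not supply.
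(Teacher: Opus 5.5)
Your proposal is correct and follows the paper's own one-line proof, which derives the corollary directly from certificate soundness, totality over $\{A,D,U\}$, and the output rule of Section~\ref{subsec:output_contract}; your case analysis on $A_x$ and $D_x$ just spells that out. You are also right that the \textit{U-CONFLICT} case needs the clause that certificate-sound outputs ``satisfy the output contract'', not the backing condition alone, which is why the paper cites the output rule alongside soundness.
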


\begin{proof}
Immediate from certificate soundness, totality over $\{A,D,U\}$, and the output rule in Section~\ref{subsec:output_contract}.
\end{proof}

The proposition is intentionally elementary and diagnostic. It exposes the witnessed-decidability commitment built into certificate-sound binary totality. That diagnosis, in turn, brings the deployment contract into view as a governance object. Since witnessed decidability is always relative to $\Sigma$, $R$, $B_{\mathrm{budget}}$, $N_{\mathrm{max}}$, $S$, $\tau$, and $t$, abstention rates cannot be interpreted independently of those parameters. A low rate of $U$ is therefore ambiguous. Such a rate may reflect genuine certificate coverage, a narrowed scope, a weakened evidential standard, a shifted threshold, or a changed witness discipline. The \textit{Tooth Social} case makes this operational. Narrowing $\Sigma$ to exclude inputs requiring congressional findings would block the later transition from $U$ to $A$ and require the system to declare those inputs out of scope. For high-stakes deployments, these parameters must accordingly be subject to external auditability and change control. Otherwise, abstention can be manipulated into either artificial decisiveness or responsibility-laundering.

The proposition concerns the contract-level entitlement predicate whose satisfaction licenses categorical public output, rather than ordinary floating-point comparison. A processor may determine whether a stored number exceeds a threshold, yet categorical output can remain unlicensed under $\mathcal C$ when the certificate fails one or more adequacy requirements, even if the numerical comparison is forced under $R$.

This distinction yields an immediate practical consequence.

\medskip
\textbf{Practical Corollary.} Threshold comparisons over constructive reals, such as $s(x)\geq\tau$, need not be decidable under the deployed regime. Since real-number trichotomy is not constructively available in general, categorical threshold output requires an exhibited witness settling one side -- typically a one-sided enclosure or positive separation from the threshold \citep[pp.~22--23]{bishop1985constructive}. Unique-argmax claims likewise require constructively witnessed strict inequalities, which demand robust separation margins or weakened certified predicates, such as certified top-$k$ sets. Hence, unless $\Sigma$ is restricted to cases in which such structure is guaranteed, an interface that is both certificate-sound and total cannot in general remain purely binary. In decompositional cases, the neither-side branch includes cases in which no candidate witness tuple is compositionally forcing within an adequate complete certificate.

\medskip
\textbf{Interface-level analogue.}
A familiar philosophical parallel appears in debates about epistemic conceptions of truth. Hilary Putnam's proposal \citep{putnam1981reason} to treat truth as idealized rational acceptability faces Crispin Wright's objection \citep{wright1992truth} that, once one combines an evidential constraint on truth with minimal truth platitudes and a classical treatment of negation, one is pushed toward an excluded-middle-style completeness claim. Under epistemically ideal conditions, the circumstances cannot be neutral between $P$ and $\lnot P$. Failure to warrant $P$ would therefore entail warrant for $\lnot P$, so genuinely undecidable cases disappear. Wright's response is to recommend a broadly \textit{intuitionistic} revision that rejects excluded middle for arbitrary $P$ in discourses where there is no guarantee of decidability. Proposition~\ref{prop:binary_decidability} recasts this tension.\footnote{The parallel is interface-level rather than regime-specific. In interval regimes, the gap appears as a failure of separation. In public evidential regimes, it appears as a failure of adequate standing or record warrant.} Insisting on both certificate soundness and total binary outputs for a deployed predicate commits a system to witnessed decidability on its declared scope. Consequently, the status $U$ is not an arbitrary engineering choice, but the interface-level counterpart of intuitionistic resistance to unwarranted completeness.

\medskip
\textbf{Design upshot.} Relative to a fixed $\mathcal C$, issuing a binary output when the uniqueness condition fails is \textit{epistemically} dishonest. It collapses a non-forcing case into a verdict and converts the absence of warrant into the appearance of entitlement. $U$ can be reduced or eliminated only by explicitly revising the contract, not by suppressing the status. In inferentialist terms, a categorical assertion binds the speaker to an epistemic commitment and exposes the claim to demands for warrant or entitlement \citep[see][]{brandom1994making}. Extending this discursive discipline to automated speech, the Brouwerian assertibility constraint treats the certificate as the automated analogue of preparedness to give reasons on demand.

The design lemma thereby helps explain why uncertified binary verdicts have the structure of ``Frankfurtian bullshit'' \citep[see][]{hicks2024chatgpt}. What makes such verdicts defective is categorical commitment without the witnesses required by a sound certificate discipline, whether or not the underlying computation is erroneous. Post hoc explanation cannot repair this defect \citep{lipton2018mythos}. A non-forcing case requires a reason-coded $U$ rather than a rationalization of an unwarranted assertion. Hence the paper's central rule: \textit{no certificate, no categorical speech act}.

\section{Limitations and Replies}\label{sec:limitations}

The constructive approach I defend -- treating model scores as bounded constructive reals and abstention as a Brouwerian norm -- is a \textit{regulative ideal}, not a ready-made solution. This section addresses selective scope, idealization, computational and claim-routing constraints, hybrid implementation, assertoric limits, governance and distributive consequences of abstention, and unresolved epistemic and institutional dependencies.

\subsection{Selective Use in High-Stakes Domains}

A first limitation concerns scope. The framework is selective, not universal:
classical systems are cheaper and faster, fully constructive neural architectures remain difficult to realize, and hard real-time settings such as autonomous braking sharply constrain online verification. 

Nonetheless, the framework is justified -- and in some settings obligatory -- where epistemic failure is costly and where overconfident or fabricated assertions can cause harm to individuals, institutions, or democratic processes. Examples include elections and political communication; welfare and credit allocation; medical triage; and safety-critical decision support. In such domains, the additional epistemic work demanded by constructive architectures should be treated as part of the cost of responsible deployment. The point is to make categorical speech conditional on public, witness-backed entitlement, so that assertions remain answerable within justificatory practice. Such a standard is demanding but bounded. It requires the exhibited warrant needed for the speech act, not exhaustive certainty or complete institutional settlement. In this limited sense, the constructive norm echoes J. L. Austin's pragmatic remark, quoted by Putnam, that ``Enough is enough; enough is not everything'' \citep[p.~121]{putnam1990craving}.

By contrast, many scientific and engineering applications operate within expert-governed validation regimes, where model outputs serve as provisional inputs to further inquiry and users may reasonably prioritize speed, scale, and exploratory power over built-in constructive semantics at the interface. The framework is most directly applicable where AI outputs enter public justificatory practices -- including law, public policy, and democratic deliberation -- and thereby shape epistemic agency and legitimacy.

\subsection{Idealization, Computation, and Verification}\label{subsec:idealization_bounds}

Treating model scores as constructive reals relative to an internal regime $R$ is a methodological abstraction. Deployed neural systems ordinarily use floating-point arithmetic and expose point estimates rather than constructive reals or nested interval sequences.\footnote{Finite-precision computation can still fall under the constructive discipline. Stored floats may be treated as degenerate bounds $[c,c]$ or as rounding-aware bounds $[c-\delta,c+\delta]$ under a floating-point model fixed by $R$. The practical issue is not whether the system computes real numbers ``in themselves,'' but whether it can propagate and check witnesses that force the comparisons on which categorical speech acts depend.} Approximating this discipline in practice requires supplementary techniques. Interval and abstract-interpretation methods may provide certified bounds, while calibration, conformal prediction, and related uncertainty methods provide model-side or distributional information unless a sound
bridge connects them to an instance-level predicate.

However, guarantees come at a cost. In deep, high-dimensional networks, the resulting bounds are often conservative.\footnote{Practical approaches to improving or scaling certification include verification-aware training \citep{gowal2018effectiveness}, 1-Lipschitz architectures \citep{anil2019sorting}, and randomized smoothing \citep{cohen2019certified}.} Intervals tend to widen with depth, and tightening them can be computationally expensive. In practice, a fully constructive approach will abstain in many cases where a classical model would issue a confident prediction. Moreover, some of the strongest guarantees are currently obtained in offline or semi-offline settings, such as formal verification or robustness analysis, rather than during every online inference. The near-term target is therefore to introduce certificate-gated decision layers selectively, at points where they do not defeat the system's purpose.

A central objection concerns the computational overhead of constructive neural design. Even if future advances make certificates cheaper, they will also improve classical architectures that do not carry constructive obligations. As a result, the structural gap persists. Models that dispense with certificate duties will remain more efficient precisely because they do not track which outputs are epistemically warranted. That gap, however, does not refute the Brouwerian norm. It shows that implementation necessarily involves trade-offs between coverage, abstention, and the amount of constructive discipline enforced online as opposed to offline. On this basis, the interface semantics developed above supplies a contract-level ideal that practical schemes -- partial verification, certified wrappers, and hybrid certification -- approximate without fully instantiating.

\subsection{Claim Individuation and Predicate Routing}\label{subsec:claim_routing}

The framework does not solve the problem of individuating claims during open-ended generation. Categorical output requires a prior, contractually specified and checkable mapping from the proposed utterance to a claim $\varphi$ and deployed predicate $\delta_\varphi$. Unresolved routing is an adequacy failure and requires $U$ with reason class \textit{U-ADEQUACY}. Near-term implementations should constrain outputs through schemas, templates, or tools so that claim identity is fixed by construction.

\subsection{Hybrid Implementations}\label{subsec:hybrid}

A realistic intermediate design is a \textit{hybrid} regime: classical models retain their usual architecture, while a constructive wrapper governs categorical output. Interval methods may produce forcing bounds or margins; calibration, conformal prediction, and semantic uncertainty remain model-side or distributional without the instance-level bridge described above. The system returns $A$ or $D$ only when the uniqueness condition holds, and otherwise reason-coded $U$. For conversational LLMs, the autoregressive core supplies linguistic fluency, while the wrapper permits only licensed categorical output or abstention. Figure~\ref{fig:hybrid_architecture} summarizes this architecture.

\begin{figure}[H]
\centering
\resizebox{\textwidth}{!}{%
\begin{tikzpicture}[
>=Stealth,
node distance=1.5cm and 2cm,
llm_box/.style={draw, thick, rounded corners, align=center, minimum height=2.2cm, minimum width=3.5cm, fill=blue!5},
routing_box/.style={draw, thick, rounded corners, align=center, minimum height=2.0cm, minimum width=3.5cm, fill=yellow!7},
wrapper_box/.style={draw, thick, rectangle, rounded corners, align=center, minimum height=2.6cm, minimum width=4.5cm, fill=gray!10},
out_forced/.style={draw=green!50!black, thick, rounded corners, align=center, minimum height=1.6cm, minimum width=3.5cm, fill=green!5},
out_unforced/.style={draw=orange!80!black, thick, rounded corners, align=center, minimum height=1.6cm, minimum width=3.5cm, fill=orange!10},
label_text/.style={font=\small\itshape, align=center}
]

\node[align=center] (query) {\textbf{User Query} \\ $x$};

\node[llm_box, right=of query] (llm) {
\textbf{Autoregressive Core} \\
\textit{Proposes:} candidate output
};

\node[routing_box, right=of llm] (routing) {
\textbf{Claim \& Predicate Routing} \\
\textit{Fixes:} $\varphi_x$ and $\delta_\varphi$
};

\node[wrapper_box, right=of routing] (wrapper) {
\textbf{Constructive Wrapper} \\ 
(Speech-Act Gate) \\[0.2cm] 
\textit{Checks witnesses under} $\langle R,t_{\mathrm{int}}\rangle$\\
\textit{and standing under} $\langle S,\tau,t\rangle$\\
\textit{relative to} $\mathcal{C}$
};

\node[out_forced, right=2cm of wrapper, yshift=1.4cm] (forced) {
\textbf{Licensed Categorical Output} \\ 
$v_x \in \{A, D\}$ \\[0.1cm] 
\textit{Trace + Certificate} $\kappa$
};

\node[out_unforced, right=2cm of wrapper, yshift=-1.4cm] (unforced) {
\textbf{Mandatory Abstention} \\
$v_x = U$ \\[0.1cm]
\textit{Reason Classes} \\
(e.g., \textit{U-ADEQUACY},\\
\textit{U-MODEL})
};

\draw[->, thick] (query) -- (llm);

\draw[->, thick] (llm) -- (routing)
node[midway, above, label_text] {candidate\\[0.1cm]output};

\draw[->, thick] (routing) -- (wrapper)
node[midway, above, label_text] {routed claim\\[0.1cm]and predicate};

\draw[->, thick, green!50!black] (wrapper.east) -- ++(0.8,0) |- (forced.west) 
node[pos=0.75, above left, label_text, text=green!50!black] {uniqueness condition\\[0.1cm]holds};

\draw[->, thick, orange!80!black] (wrapper.east) -- ++(0.8,0) |- (unforced.west) 
node[pos=0.75, below left, label_text, text=orange!80!black] {no uniquely\\[0.1cm]licensed side};

\draw[->, thick, orange!80!black] (routing.south) -- ++(0,-2.5) -| (unforced.south)
node[pos=0.1, left, label_text, text=orange!80!black] {unresolved\\[0.1cm]routing};

\end{tikzpicture}%
}
\caption{Hybrid architecture: an autoregressive core generates a candidate output, a routing layer identifies its claim and predicate, and a constructive wrapper licenses $A/D$; routing failure or non-unique licensing yields reason-coded $U$.}
\label{fig:hybrid_architecture}
\end{figure}

\subsection{Assertoric Scope within Broader Persuasion}\label{subsec:assertoric_scope}

While AI persuasion extends beyond assertion to salience, framing, selection, and affective tone \citep{floridi2024hypersuasion}, the present framework targets a narrower locus: fluent prose that presents contested world-claims as settled. Its direct guarantee thus concerns influence with \emph{truth-apt uptake}, whose force depends on the recipient taking some explicit or implied proposition about how things stand to be the case. This narrower focus tracks an observed mechanism. In the experiments reported by Lin et al.\ (\citeyear{lin2025persuading}), models influenced political attitudes by supplying relevant information, some of which proved inaccurate. Within the declared high-stakes scope $\Sigma$, gating categorical speech prevents such framing from borrowing the authority of an unlicensed assertion.

The framework constrains one way automated systems may encourage users to bypass the deliberative capacities whose erosion Ferdman describes \citep{ferdman2025ai}: treating contested propositions as already resolved. It does not itself cultivate or restore those capacities. Presupposed or implied commitments require the claim-routing step of Section~\ref{subsec:claim_routing}. Purely affective or selection-based influence remains outside the gate's direct reach and requires complementary measures such as privacy, accountability, competition, and education \citep[compare][]{floridi2024hypersuasion}.

\subsection{Internal--External Entanglement and Institutional Dependency}\label{subsec:institutional_dependency}

\textit{First}, perfect internal certification establishes soundness only relative to a declared regime $R$. That regime may remain misaligned with substantive standards of evidential relevance, measurement, and harm, even when data, labels, documentation, and institutional practices are incorporated into the system's training and deployment pipeline. The Brouwerian assertibility constraint disciplines how a trained system may speak given its current constructions and certificates. An adequate forcing certificate establishes compliance with the declared mathematical bounds and specified public record, not the truth of the underlying claim, the impartiality of the evidence or training pipeline, the legitimacy of public standards, or the substantive adequacy of the model's representations.

\textit{Second}, threshold specifications such as $\tau$, where used, contribute to what counts as forcing for designated predicates or leaves; they cannot be fixed by mathematics alone. They are normative and institutional parameters to be set through legal, democratic, or domain-specific processes. The framework presupposes rather than dictates those choices. By making such thresholds explicit and auditable rather than leaving them implicit in opaque model behavior, the framework resists technocratic gatekeeping and returns threshold-setting to public contestation.

\textit{Third}, abstention is not neutral. High $U$-rates can reflect limited evidence or loose bounds and may be epistemically appropriate where unwarranted confidence is more harmful than restraint; but $U$ is also a
behavioral, organizational, strategic, and distributive outcome that must be governed.

Behaviorally, too many $U$ outcomes may drive frustrated users toward fluent but unconstrained models, making it an important design challenge to render abstention engaging and educational. Institutions must specify how $U$ triggers escalation, delay, or human review so that abstention does not become a loophole for evasion or bureaucratic deferral. Where action is required by a deadline, the contract must also distinguish warrant status from action policy. A reason-coded $U$ does not itself determine whether the institution should delay, escalate, follow a governing presumption, or adopt a precautionary default. Abstention also has distributive effects. Groups that are underrepresented or systematically mismeasured may receive disproportionate $U$ outcomes. Where this produces delay, heightened scrutiny, or effective exclusion, a constructive regime may reproduce or exacerbate existing injustice unless those patterns are actively monitored and regulated. $U$-rates should accordingly be tracked by subgroup, linked to procedural safeguards, and paired with guaranteed pathways for timely escalation and contestation.

For this reason, high-stakes deployments should log not only the $U$-class but also the provenance of the gap: \emph{world-grounded} when the relevant public standard or record is genuinely unsettled, \emph{pipeline-grounded} when abstention is traceable to the system's own retrieval or measurement, and \emph{conflict-grounded} when adequate certificates force opposed sides.\footnote{See SA~D.4.}

\textit{Fourth}, even where constructive semantics would be desirable, many institutions currently lack the expertise or resources required to specify, audit, and sustain such regimes. Any serious attempt at implementation requires investment in technology, administrative capacity, and public education.

Taken together, these limitations reveal a central institutional dependency of the framework. The Brouwerian constraint can discipline when a system may speak categorically, but it cannot legitimate the public standards, thresholds, decompositions, or adjudicative procedures on which such speech depends. This dependency is clearest in conflict cases. A conflict-grounded $U$ can disclose strategically induced uncertainty and trigger escalation to substantive public adjudication, but it cannot itself resolve the conflict within the warrant-relevant time window. Previously licensed categorical claims may consequently be withheld until the conflict is resolved, even though the framework cannot determine internally whether the contrary certificate is substantively well grounded.
Structurally, this is not an implementation defect. Once machine speech is made answerable to public warrant while the system disclaims authority to constitute that warrant, failures of standards, conflict, and contestation necessarily return to the responsible authorities.

Conversely, a complementary danger arises on the assertion side. A facially adequate finding from a captured institution can launder an unwarranted verdict into licensed assertion. The contestability route can localize and audit that failure but cannot ground the verdict itself.\footnote{See SA~E.4.5.}

\section{Conclusion}\label{sec:conclusion}

In high-stakes domains, categorical AI outputs should be governed by publicly inspectable and contestable warrants rather than confidence alone. Under the Brouwerian assertibility constraint, a system may return \textit{Asserted} or \textit{Denied} only when the permitted trace supports exactly one side with an adequate forcing certificate. If the trace supports neither side, or if it supports both sides, it must return \textit{Undetermined}. Accordingly, the three statuses mark entitlement to speak categorically, not the truth value of the underlying world-claim. The design lemma makes explicit that, under the stated refutation-soundness condition, binary-total, certificate-sound output entails witnessed decidability of the deployed predicate on its declared scope.

The framework accommodates any uncertainty formalism capable of producing checkable witnesses that force the deployed predicate under the contract. At the interface, the system serves as a \textit{speech-act gate}, not as a predictor: whatever method supplies the witnesses, contestable warrant remains a condition of categorical output. At the point of categorical commitment, this constraint can provide epistemic scaffolding that resists deskilling by directing users toward reasons, records, and accountable procedures instead of allowing a fluent verdict to displace the justificatory work of inquiry and judgment. Its broader aim is therefore to preserve epistemic agency by withholding categorical standing where entitlement cannot be exhibited.

From the perspective of democratic epistemology, this reorientation has a further significance. The framework gives operational form to the epistemic turn's treatment of truth as a regulative ideal for public justification. By pairing $U$ with reason classes and an indication of what would count as grounds for moving to $A$ or $D$, it keeps high-stakes exchanges within the space of acknowledgment, not mere verdict. Yet acknowledgment in its fullest sense is precisely what a certificate-disciplined system cannot supply. Such a system does not mean, experience, or acknowledge anything; it extends our sentences through statistical association alone.

My proposal does not attempt to supply models with an inner truth principle. Instead, it marks a boundary between simulated speech and responsible saying proper by constraining when categorical outputs are permitted. A witness, as the term is used here, is not a substitute for \textit{inner life}, but a minimal and contestable surrogate for experience: a checkable basis that can be inspected, challenged, and refused. Read in this way, the requirement that systems either exhibit their warrant or return \textit{Undetermined} imposes a Cavellian discipline. It keeps in view that the words ultimately at issue are ours, and that acknowledgment and answerability remain human tasks. If models are to participate as intermediaries in democratic life, they must do so under norms that prevent their fluency from displacing the work of finding and standing behind our own words.

A fitting place to end is Brouwer's insistence that intuitionistic mathematics concerns ``inner architecture,'' and that foundational inquiry can have ``revealing and liberating consequences, also in non-mathematical domains of thought'' \citep[p.~1249]{brouwer1949consciousness}. Adapted to our context, this claim acquires an unexpected contemporary resonance. Model architectures now shape and mediate democratic life alongside human agents, forming a new constellation of \textit{minds} and \textit{machines}. Yet if such architectures amount merely to a ``meaningless building of words,'' they risk assuming the role Brouwer feared in formalism: blind ``regulators of social struggle'' that impose ``norms of mechanization'' on human beings \citep[Brouwer, as cited in][p.~ix]{vandalen2026intuitionistic}. 

Responding to the current crisis of epistemic agency calls for co-evolution with these systems. Such co-evolution requires building architectures disciplined by certificate-based norms of assertion and educating citizens and institutions to understand, contest, and revise what such systems claim to know. The Brouwerian constraint translates epistemic restraint into an operational architecture, shifting what we demand of automated speech away from frictionless verdicts and toward contestable entitlement. In Stanley Cavell's terms, happiness is won not only by opposing those in power but also by ``educating them, or their successors'' \citep[p.~4]{cavell1981pursuits}. Applied to our present condition, that education is procedural and epistemic. It means building and organizing our ``Algorithmic City'' responsibly by making automated speech answerable to shared records and checkable witnesses. This becomes a project of mutual education of grownups, extended to our shared technological life. Through democratic practices, we must learn to hold our machines to account and, in doing so, continue our own education as epistemic agents.

\section*{Acknowledgments}
While I authored the manuscript in its entirety, I stress-tested its arguments dialectically with large language models in an intentional, meta-level exercise in epistemic agency. Just as the paper argues for holding automated speech accountable to shared records and checkable constraints, I treated the AI as a conversational adversary, reserving the role of the \textit{creating subject} for myself and taking sole epistemic and editorial responsibility for each word. In this performative sense, the paper is itself an object of the ``mutual education'' and ``co-evolution'' it advocates.

On the human side, I am grateful to Moritz Weckbecker, Mathieu Kaltschmidt, Christoph Kern, and Mark van Atten for their feedback and intellectual generosity. It is within this shared human community that the arguments presented here find their true grounding.

\end{document}